%% file: main.tex
\documentclass[journal]{IEEEtran}
\IEEEoverridecommandlockouts
\usepackage{epsfig, endnotes}
\usepackage{amsmath}
\usepackage{amsthm}
\usepackage{amsmath,amsfonts}
\usepackage{algorithm}
\usepackage{algpseudocode}
\usepackage{array}
\usepackage{textcomp}
\usepackage{stfloats}
\usepackage{verbatim}
\usepackage{graphicx}
\usepackage{multirow}
\usepackage{balance}
\usepackage{graphicx} 
\usepackage{booktabs}
\usepackage{xcolor}
\usepackage{soul}
\usepackage{tabularx}
\usepackage{lipsum}  
\usepackage{amssymb}
\usepackage{bm}
\usepackage{mathtools}
\usepackage{amsthm}

\newtheorem{theorem}{Theorem}
\newtheorem{corollary}{Corollary}

\newtheorem{assumption}{Assumption}  
\theoremstyle{remark}

\usepackage{url}

\usepackage{wrapfig}
\usepackage{graphicx}
\usepackage{subfig}
\usepackage{circledtext}
\usepackage{pifont}
\AtBeginDocument{%
  }

\newcommand{\myparatight}[1]{\smallskip\noindent{\bf {#1}:}~}

\usepackage{algorithm}
\usepackage{algpseudocode}
\usepackage{multirow}
\usepackage{soul}

\usepackage{xspace}
\newcommand{\algname}{\texttt{OpenTwin}\xspace}

\newcommand{\cmark}{\ding{51}}
\newcommand{\xmark}{\ding{55}}
\newcommand{\pmark}{$\sim$}

\usepackage{cleveref} 
\crefname{axiom}{axiom}{axioms} 
\crefname{definition}{definition}{definitions}
\crefname{lemma}{lemma}{lemmata}

\allowdisplaybreaks[3]

\begin{document}

\setlength{\textfloatsep}{3pt}
\setlength{\intextsep}{3pt}
\setlength{\floatsep}{3pt}
\setlength{\jot}{3pt} 
\setlength{\abovedisplayskip}{3pt}
\setlength{\belowdisplayskip}{3pt}

\title{
\algname: 
Closed-Loop Digital Twins for Trustworthy Policy Deployment in Open RAN
}

\author{

		\IEEEauthorblockN{
            Zifan Zhang,
            Md Sharif Hossen,~\IEEEmembership{Senior Member,~IEEE},
            Dara Ron, \\
            Vijay K. Shah,~\IEEEmembership{Senior Member,~IEEE}, 
            Yuchen Liu,~\IEEEmembership{Member,~IEEE} \\
 }

\thanks{
Zifan Zhang and Yuchen Liu are with the Department of Computer Science, North Carolina State University, Raleigh, NC, 27695, USA (Email: \{zzhang66, yuchen.liu\}@ncsu.edu).}
\thanks{
Md Sharif Hossen, Dara Ron, and Vijay K. Shah are with the Department of Electrical and Computer Engineering, North Carolina State University, Raleigh, NC, 27695, USA (Email: \{mhossen, dron, vijay.shah\}@ncsu.edu).}
}

\maketitle

\begin{abstract}
In open radio access networks (O-RAN), the near-real-time RAN Intelligent Controller (RIC) hosts third-party xApps whose training and validation risk disrupting the operational network. Disaggregation amplifies this risk, as no single party can certify a control action end to end. Indeed, our testbed shows an E2 control request reported as successful while the base station never applies the change.
Digital twins (DTs) promise safe policy evaluation, yet existing O-RAN DTs largely rely on hand-crafted models, run without feedback from the deployment, and never say how often their predictions can be trusted.
To fill this gap, we present \algname, a closed-loop framework that learns the simulator configuration reproducing an operating deployment's streamed measurements, certifies the resulting DT by re-simulation, calibrates it online, and evaluates each xApp action before it executes on the physical network.
Every trust decision carries an error rate bounded by an operator-prescribed budget, with a drift detector limiting needless resynchronizations and a conformal fidelity gate admitting an action only when its predicted outcome range lies in the safe region.
Extensive experiments across simulation and real-world testbeds confirm single-digit percentage error in reproduced measurements, false approvals an order of magnitude below every budget from 0.05 to 0.30, and a gated energy-saving xApp that retains roughly 40\% of the saving achievable with perfect foresight.
\end{abstract}


\input{sections/intro}

\input{sections/background}
\input{sections/method}

\input{sections/theory}
\input{sections/eval}
\input{sections/exp}
\input{sections/conclusion}

\balance
\bibliographystyle{IEEEtran}
\bibliography{ref}

\end{document}

%% file: sections/intro.tex
\section{Introduction}

The next-generation open radio access network (O-RAN) is reshaping the cellular ecosystem and is being deployed in 5G/6G networks worldwide~\cite{tripathi2025fundamentals}.
Unlike vertically integrated legacy architectures, O-RAN promotes openness, programmability, and virtualization, allowing operators to mix multi-vendor hardware and software components.
Two RAN intelligent controllers (RICs) sit at the center of this design and enable software-defined, data-driven control~\cite{agarwal2025open}.
The near-real-time RIC (near-RT RIC) runs fast control loops between $10$~ms and $1$~s through xApps, while the non-real-time RIC (non-RT RIC) hosts rApps at slower timescales above $1$~s.
Both host third-party artificial intelligence (AI) microservices that address O-RAN use cases such as handover, radio resource optimization, traffic steering, and energy saving~\cite{agarwal2025open}.
Digital twins (DTs) have emerged as a complementary tool for developing these controllers.
A DT provides a continuously synchronized virtual copy of the physical network that supports monitoring, design, and analysis, and above all lets AI models be trained, verified, and validated without touching the live network~\cite{zhang2025digital}.
\mbox{ITU-T} has standardized network DT requirements and architecture~\cite{ITUT2022Y3090}, and ITU-R identifies the DT as a key 6G enabler toward 2030~\cite{ITU_IMT2030}.
However, a DT helps closed-loop control only if it mirrors the physical network closely and its outputs can be trusted, because a DT that drifts unnoticed can approve unsafe actions or trigger needless resynchronizations.

\myparatight{Motivation and Challenges}
Operators build O-RAN DTs to evaluate and deploy control policies without putting the operating network at risk.
We focus on xApps at the near-RT RIC, a harder setting than rApp policy tuning at the non-RT tier, because an xApp decision takes effect on the operating network within seconds and leaves no time for offline review.
In a vertically integrated RAN, a single vendor owns both the controllers and the base stations (BSs) and validates its own control actions internally.
O-RAN removes that ownership, because the near-RT RIC hosts third-party xApps whose developers typically have no access to the deployment they act on.
Their control commands also cross the E2 interface into a network whose components come from different vendors.
Thus, no single party owns both the controller and the controlled RAN, and none can certify a control action end to end.
An incorrect action therefore remains possible while no one is positioned to catch it.

Two preliminary measurements from our own testbed make this risk concrete.
In one open-source 5G stack, E2 control handlers can acknowledge success even when the target scheduler remains unchanged. Moreover, two E2 protocol versions interpret the same acknowledgment byte differently, creating a mismatch between reported and actual control execution (see Section~\ref{sub:interop} for details).
An xApp therefore cannot use the E2 interface to confirm what its own action did.
Trustworthy deployment requires a separate component that predicts the outcome of each action and verifies that prediction against measurements the deployment itself reports.
A DT can serve this role if it faithfully reproduces the deployment's key performance measurements (KPMs), evaluates an action within the time the near-RT loop allows, and reports how far its outputs can be trusted.
However, four obstacles make such a DT hard to build.
\textit{First}, KPMs remain scarce.
Operator traces are restricted by privacy and regulation, vendor measurement support is inconsistent, retention policies discard short-lived dynamics, and open-interface collection adds delay.
\textit{Second}, validating a policy on the operating network risks service disruption, as trial-and-error learning can cause instability and does not scale to large numbers of cells and users.
\textit{Third}, an xApp decision must be made within the near-RT window of at most one second, whereas one simulation run takes minutes, so candidate actions cannot be evaluated by launching a new simulation for each one.
\textit{Lastly}, observed KPMs do not point to a single configuration, since different configurations can produce similar measurements, and a DT built once and left unchanged drifts further from the deployment as conditions evolve.
A DT that reproduces the measurements offline and evaluates policies is therefore promising, but only if its accuracy can be certified with a known error rate.

\myparatight{Limitations and Opportunities}
Existing O-RAN DTs address parts of this open problem but leave a clear gap.
Ray-tracing platforms, channel emulators, and integrated software-stack DTs~\cite{NVIDIA_AerialOmniverse2024, polese2024colosseum, owdt2025} reproduce a deployment from a hand-built site model or a pre-generated scenario, which serves pre-deployment testing well.
None of them calibrates against the operating deployment while it runs, and none reports how often its verdict on an individual action is wrong.
Platforms that do connect to an operating deployment either copy its live measurements into a cloud model for what-if analysis or standardize KPM collection for xApp training~\cite{deshpande2025twinran, 10.1145/3592149.3592161}.
In both cases the decision to act still rests on confidence that was never measured.
A faithful DT alone does not close the gap either, since a DT can match every observed KPM and still give no evidence about how often its verdict on a specific action is wrong.
Distribution-free inference~\cite{simeone2025conformal, hou2024automatic} provides guarantees that hold for the amount of data actually collected and require no assumption about how KPMs are distributed, yet no existing closed-loop O-RAN DT attaches such guarantees to its trust decisions.
We call this the \textit{safe-control gap}, since no party or interface can certify a control action before it executes.
Closing it turns the DT from a passive model into a trusted component of the control loop with a prescribed error budget.

\myparatight{Contributions}
To this end, we develop \textbf{\algname}, a high-fidelity closed-loop DT for validating and verifying xApps at the near-RT RIC.
\algname restores the end-to-end certification that disaggregation removed and keeps the error rate of every trust decision below a budget the operator sets in advance.
In essence, three core techniques are developed:

\begin{itemize}
\item \textit{First}, \algname constructs the DT directly from telemetry, without a hand-built site model.
A learned inverse model maps the live KPM stream to the simulator configuration that would reproduce it, and re-simulation over candidate configurations certifies the DT against the same measured stream.
Online calibration and drift-triggered resynchronization then keep the DT aligned with the deployment as conditions evolve.
Note that \algname is not tied to a particular simulator, since the DT can be instantiated on any network simulator that exposes the same configuration parameters.

\item \textit{Second}, a sequential drift detector monitors the deviation between calibrated DT outputs and physical measurements, and it triggers a resynchronization only once the accumulated evidence that the DT no longer matches the deployment crosses a threshold.
Its bounded detection delay and the calibrator's bounded tracking error together bound the fidelity duty cycle, the fraction of time the DT runs less accurately.

\item \textit{Third}, a conformal fidelity gate turns the DT's recent prediction errors into a range of plausible post-action KPMs, a prediction set, that contains the true outcome at a rate the operator prescribes. A calibrated forward model predicts these KPMs without additional simulation, and the gate admits an xApp action only when its prediction set lies within the operator-defined safe region.
A theoretical analysis proves guarantees that hold at the sample sizes actually available, bounding how often the detector triggers an unnecessary DT resynchronization under any inspection schedule and keeping unsafe approvals below the prescribed budget even as the environment drifts (see Section~\ref{sec:theory}).

\item Extensive experiments spanning simulation, a multi-vendor 5G testbed, and real radio hardware validate \algname.
They reveal interoperability defects in E2 control and demonstrate DT-gated xApp control, with actuation verified at the RAN scheduler.

\end{itemize}

To the best of our knowledge, \algname is the first closed-loop O-RAN DT that governs both resynchronization and action admission under error budgets prescribed in advance.



%% file: sections/background.tex
\section{Background and Related Work}
\label{sec:background}
\label{sec:related}

\begin{table*}[t]
\centering
\caption{Positioning of \algname vs. O-RAN DT platforms (\cmark{} native, \pmark{} partial, \xmark{} absent).}
\label{tab:positioning}
\renewcommand{\arraystretch}{1.0}\footnotesize\setlength{\tabcolsep}{3.5pt}
\begin{tabular}{lcccccccc}
\toprule
\textbf{Platform} & \textbf{System reproduced} & \textbf{Config.} & \textbf{Scenario} & \textbf{Online KPM} & \textbf{Drift-triggered} & \textbf{Control} & \textbf{Decision} & \textbf{O-RAN} \\
 & \textbf{by the DT} & \textbf{inference} & \textbf{portability} & \textbf{calibration} & \textbf{resync} & \textbf{validation} & \textbf{guarantee} & \textbf{interfaces} \\
\midrule
Aerial Omniverse~\cite{NVIDIA_AerialOmniverse2024} & Ray-traced channel & \xmark & \xmark & \xmark & \xmark & \pmark & \xmark & \pmark \\
Colosseum DT~\cite{polese2024colosseum} & RF channel emulator & \xmark & \xmark & \pmark & \xmark & \pmark & \xmark & \cmark \\
TwinRAN~\cite{deshpande2025twinran} & Operating RAN via E2 & \xmark & \pmark & \pmark & \xmark & \pmark & \xmark & \cmark \\
Open Wireless DT~\cite{owdt2025} & OAI + ray tracing & \xmark & \xmark & \xmark & \xmark & \xmark & \xmark & \cmark \\
ns-O-RAN-flexRIC~\cite{ns-O-RAN-flexric} & Packet-level simulation & \xmark & \xmark & \xmark & \xmark & \pmark & \xmark & \cmark \\
\addlinespace[7pt]
\textbf{\algname (this work)} & Operating RAN, software to RF & \cmark & \pmark & \cmark & \cmark & \cmark & \cmark & \cmark \\
\bottomrule
\end{tabular}
\vspace{-.2in}
\end{table*}

\myparatight{DT platforms for O-RAN}
To date, several platforms build DTs for wireless and O-RAN systems~\cite{zhou2025digital}, and Table~\ref{tab:positioning} compares \algname with them across seven criteria.
Every platform in the table reproduces its target system from a hand-built model, whether ray-traced, emulated, or simulated over open-source RAN and RIC software~\cite{NVIDIA_AerialOmniverse2024, polese2024colosseum, owdt2025, ns-O-RAN-flexric}, and no surveyed platform learns its configuration parameters from telemetry.
TwinRAN~\cite{deshpande2025twinran} copies live E2 measurements into a cloud model, which re-targets the DT to a new deployment but never recovers the configuration parameters that produced those measurements, hence its partial marks in the table.
None continuously calibrates its DT against the live KPM stream or regenerates it once drift shows that it no longer explains the deployment.
Recent work~\cite{zhang2026untwinning} targets the same maintenance need from the reverse direction, searching for the configuration that explains a twin after it has drifted.
In the control-validation column, a partial mark denotes pre-deployment testing rather than per-action screening, and no platform certifies a policy against a guaranteed error rate.
In contrast, \algname builds its DT from the operating RAN itself through standardized KPMs, and it closes the safe-control gap by learning the configuration of that specific deployment and validating control policies in the DT before they execute on the physical network.
Scenario portability, the ability to re-target a DT to a new site or deployment, remains partial for \algname, since a new deployment requires re-running the inversion on its measurements and we have not yet evaluated that transfer across sites.

\myparatight{Closed-loop control and safety in O-RAN}
The ns-O-RAN extension of OpenRAN Gym~\cite{10.1145/3592149.3592161} connects ns-3 with an E2-compliant RIC for machine learning data collection and hosts traffic-steering xApps~\cite{lacava2023programmable}.
SafetyCopilot~\cite{jawad2026safetycopilot} screens near-RT RIC actions against safety rules written by hand, whereas \algname admits an action only when a calibrated range of predicted outcomes stays inside the safe region, at a known error rate.

\myparatight{Guarantees for learning-driven control}
The two trust decisions of \algname build on two lines of distribution-free inference, online conformal prediction and sequential testing.
Adaptive conformal inference keeps a prediction set covering a target fraction of outcomes even when the data distribution shifts~\cite{gibbs2021adaptive, gibbs2022conformal}, and is the basis of our conformal fidelity gate.
Conformal risk control extends this coverage guarantee to losses that change monotonically with the decision threshold~\cite{angelopoulos2022conformal}, the property our false-approval bound relies on.
E-detectors accumulate evidence over time and bound a monitor's false-alarm rate no matter how often it is inspected~\cite{shin2022edetectors}, and they underlie our resynchronization detector.
In wireless systems, this line of work has calibrated the outputs of black-box models~\cite{simeone2025conformal} and used a DT to select among candidate models online~\cite{hou2024automatic}, and the latter is the closest setting to ours.
Tao~\emph{et~al.}~\cite{tao2025provable} bound how far a policy trained in a DT can fall below optimal once deployed.
Their bound uses a bisimulation metric, a state-similarity measure estimated offline from sampled transitions, so it does not use the DT-to-physical deviation an operating loop observes at run time.
Across these works, the guarantee attaches to a predictor's output, a model choice, or an offline policy comparison, whereas our \algname applies it to two decisions inside a DT-driven loop, namely whether to admit an action and whether to resynchronize the DT.

%% file: sections/method.tex
\section{Overview of \algname System}
\label{sec:framework}
\label{sec:sysarch}

The DT denotes the maintained virtual representation of one operating deployment rather than any particular simulator.
It comprises a network simulation engine, hereafter the backbone simulator, that regenerates deployment measurements from learned configuration parameters.
An online calibrator then keeps the simulator's outputs aligned with that deployment as each new KPM report arrives.
A calibrated forward model completes the DT and scores candidate xApp actions before any reaches the physical network.
\algname constructs and maintains this DT at the non-RT tier, yet it guards xApp actions that execute at the near-RT RIC.
Hosting the DT at the slower tier adds no latency to that path, since an admission check costs only microseconds.

Figure~\ref{fig:framework} shows the overall architecture.
A single RIC instance connects the DT to the physical network, the physical network reports standardized KPMs through an E2 agent, and the loop consists of three modules.
\textbf{The first module}, DT construction (Section~\ref{sec:construction}, steps {\Large \ding{172}}--{\Large \ding{175}}), consumes the raw KPM stream and produces a running, certified DT.
An inverse model learns configuration parameters that reproduce the observed KPMs, re-simulation then checks the instantiated DT against the observed telemetry, and the certified DT regenerates its own KPM stream.
\textbf{The second module}, online calibration for fidelity and freshness (Section~\ref{sec:calibration}, steps {\Large \ding{176}}, {\Large \ding{178}}, and {\Large \ding{179}}), consumes the physical and DT-side KPM streams and produces calibrated DT outputs together with a deviation score that measures how far the DT has drifted from the physical network.
A resynchronization detector watches that score and re-inverts the configuration parameters when the DT no longer explains the deployment.
\textbf{The third module}, trustworthy action admission (Section~\ref{sec:trust}, steps {\Large \ding{177}} and {\Large \ding{180}}), consumes candidate xApp actions together with the forward model's post-action predictions and produces the admission decision.
A supervisor engine hosts both decision stages, marked by the dashed grouping in Fig.~\ref{fig:framework}.
An admitted action returns along the E2 control path, closing the loop, and Algorithm~\ref{alg:OpenTwin} summarizes one iteration.
The budgets $\alpha_{\mathrm{res}}$ and $\alpha_{\mathrm{gate}}$ are the only error budgets an operator sets in advance, and each is an allowed error rate rather than a threshold on the deviation score.
Note that \algname does not depend on which 5G implementation registers over E2, and Section~\ref{sub:interop} demonstrates this by attaching a second independently implemented stack through a translation layer that rewrites its E2 messages into the encoding the shared RIC expects.

\begin{figure*}[t]
\centering
\includegraphics[width=0.9\textwidth]{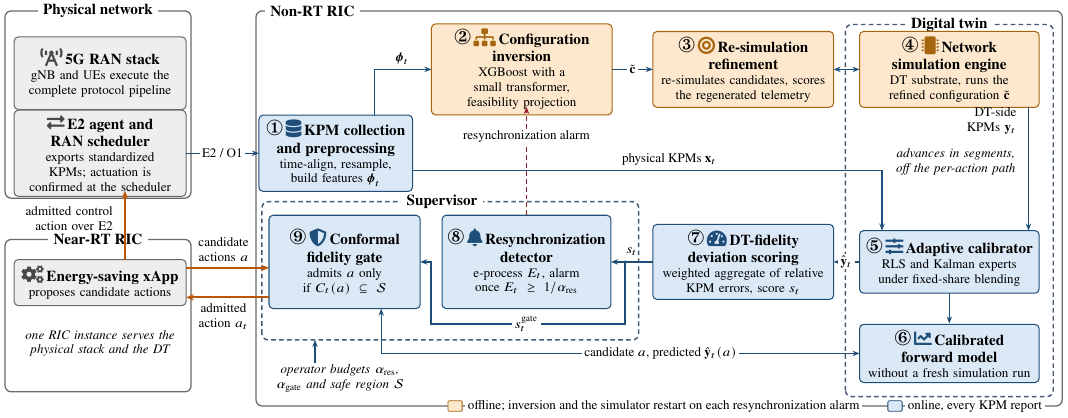}
\caption{The \algname closed-loop architecture. Steps {\Large \ding{172}}--{\Large \ding{180}} mark the loop stages detailed in Sections~\ref{sec:construction}--\ref{sec:trust}, and the dashed box marks the supervisor engine hosting the two trust decisions.}
\label{fig:framework}
\vspace{-.1in}
\end{figure*}


\begin{algorithm}[t]
\caption{\algname Closed-Loop Framework}
\label{alg:OpenTwin}
\begin{algorithmic}[1]
\State {\bfseries Input:} KPM stream $\{\mathbf{x}_t\}$, budgets $\alpha_{\mathrm{res}},\alpha_{\mathrm{gate}}$, safe region $\mathcal{S}$, calibrator parameters $(\lambda,W,\alpha_{\text{sh}},\eta)$, step size $\gamma$, declared score range $B$, shift grid $(\{\delta_k\},\{\pi_k\})$
\State {\bfseries Init:} $E_0\!\leftarrow\!1$, radius $q_1$ from the calibration window, calibrator state reset, configuration $\tilde{\mathbf{c}}$ from the initial inversion, certified by re-simulation (Section~\ref{sec:refine})
\For{$t=1$ {\bfseries to} $T$}
    \If{$E_{t-1} \ge 1/\alpha_{\mathrm{res}}$}
        \State Re-invert and project fresh configuration parameters $\tilde{\mathbf{c}}$ from recent KPMs (Section~\ref{sec:config-infer}), restart the backbone simulator, and reset the calibrator and detector
    \EndIf
    \State $\mathbf{y}_t \leftarrow \text{sim}(\tilde{\mathbf{c}})$; form the calibrated estimate $\hat{\mathbf{y}}_t$ per KPM with the fixed-share expert set
    \State Form the deviation score $s_t$ by~\eqref{eq:global_dev} and update the e-process $E_t$ by~\eqref{eq:eprocess}
    \For{each candidate xApp action $a$}
        \State Predict $\hat{\mathbf{y}}_t(a)$ with the forward model and admit $a$ if the prediction set $C_t(a)$ of~\eqref{eq:confset} at radius $q_t$ lies inside $\mathcal{S}$
    \EndFor
    \State Apply the xApp's preferred admissible action $a_t$, or the null action if none, observe $\mathbf{x}^{\mathrm{post}}_t$, score it as $s^{\mathrm{gate}}_t=\varrho(\mathbf{x}^{\mathrm{post}}_t,\hat{\mathbf{y}}_t(a_t))$, and update the radius $q_{t+1}$ by~\eqref{eq:aci}
\EndFor
\end{algorithmic}
\end{algorithm}

\section{O-RAN DT Construction}
\label{sec:construction}

\subsection{KPM Preprocessing}
\label{sec:kpm-preproc}
Raw telemetry arrives over E2 at non-uniform intervals and heterogeneous scales.
KPM collection and preprocessing (step {\Large \ding{172}} in Fig.~\ref{fig:framework}) therefore produces two time-aligned outputs, the feature vector $\boldsymbol{\phi}_t$ that the inverse model consumes and the physical KPM vector $\mathbf{x}_t$ that the calibrator regresses the DT output onto.
Let $\mathcal{M}$ denote the set of monitored KPMs and $\mathbf{x}_t=[x^{(m)}_t]_{m\in\mathcal{M}}$ the physical KPM vector at index $t$, where timestamps align to a common clock and every stream is resampled to sampling interval $T_s$.
We drop the superscript $m$ when discussing a generic scalar KPM.
Each KPM then contributes lagged terms $\{x_{t-1},\dots,x_{t-L}\}$ of order $L$, a windowed average $\bar{x}_t = \frac{1}{W}\sum_{i=0}^{W-1} x_{t-i}$, rate-of-change terms, and cross-KPM ratios.
Every feature is standardized with training-set statistics and carries a per-sample quality mask $b_t\in\{0,1\}$ that flags unreliable values, while missing values and heterogeneous vendor reporting periods are handled by the synchronized DT, which regenerates missing measurements.

\subsection{Network Configuration Inversion}
\label{sec:config-infer}
Given preprocessed features, the XGBoost-based~\cite{chen2016xgboost} model of Fig.~\ref{fig:framework} (step {\Large \ding{173}}) infers configuration parameters that reproduce the observed KPMs, and those parameters initialize the DT.
We denote the set of configuration parameters as a mixed vector $\mathbf{c}_t=(\mathbf{c}^{\text{cont}}_t,\mathbf{c}^{\text{cat}}_t)$, where $\mathbf{c}^{\text{cont}}_t\in\mathbb{R}^{p}$ collects bounded continuous parameters such as bandwidth, packet size, and buffer size, each with box constraints $[\ell_j,u_j]$, and $\mathbf{c}^{\text{cat}}_t$ collects categorical parameters such as mobility model or handover policy, each drawn from a vocabulary $\mathcal{V}_k$.
Boosted trees re-fit quickly on the simulated training set and evaluate cheaply at every resynchronization.
Because the backbone simulator generates labeled examples on demand, we train the XGBoost models of the inverse map on simulated pairs $\{(\boldsymbol{\phi}^{\text{sim}}_s,\mathbf{c}_s)\}_{s\in\mathcal{T}}$, where $\boldsymbol{\phi}^{\text{sim}}_s$ is the feature vector of a simulated KPM sample and $\mathbf{c}_s$ are its known configuration parameters.
Specifically, the regressors $\{f_j\}_{j=1}^{p}$ and classifiers $\{g_k\}_{k=1}^{K_{\mathrm{cat}}}$ are trained under one composite loss:
\predisplaypenalty=100
\begin{equation}
\begin{aligned}
\mathcal{L} = \sum_{s\in\mathcal{T}} w_s \Bigg[
    & \sum_{j=1}^{p} a_j \mathcal{L}_{\text{Huber}}
    \!\big(f_j(\boldsymbol{\phi}^{\text{sim}}_s),\, c^{\text{cont}}_{j,s}\big) \\
    & + \sum_{k=1}^{K_{\mathrm{cat}}} a^{\text{cat}}_k \mathcal{L}_{\text{CE}}
    \!\big(g_k(\boldsymbol{\phi}^{\text{sim}}_s),\, c^{\text{cat}}_{k,s}\big)
    \Bigg],
\end{aligned}
\label{eq:composite}
\end{equation}
where $a_j,a^{\text{cat}}_k>0$ weight each parameter by how much an error on it costs the DT, the Huber loss guards against outliers, and the cross-entropy loss uses class weights to mitigate imbalance.
A small Transformer encoder serves as a second estimator for report periodicity and handover mode, since these two parameters show up in the time ordering of a KPM stream rather than in its window aggregates.
Table~\ref{tab:identifiability} confirms that each estimator recovers a different kind of parameter, since the Transformer leads only on these two and the boosted trees match or beat it on every other identifiable one.
Neither changes what the KPMs expose, as the error of both runs from a few percent on the most identifiable parameters to above 100\% on the buffer size.
\algname therefore adds the calibration stage of Section~\ref{sec:calibration} rather than trusting inversion alone.
Gradient-boosted trees consume window aggregates, so they miss configuration effects that appear only in the temporal shape of a KPM sequence.
Report periodicity, in contrast, follows almost linearly from the number of KPM indications a run emits, while handover-timer mode alters downlink scheduling even without a handover.
The sample weight combines data quality and temporal recency as $w_s = b_s\,\mu^{\,T-s}$ with $\mu\in(0,1)$, where $b_s$ is the quality mask, $T$ the most recent training index, and $\mu$ a recency decay.
Both stay uniform in the reported fits.

At inference, the physical feature vector $\boldsymbol{\phi}_t$ is passed through the trained models to obtain continuous outputs $\hat{\mathbf{c}}^{\text{cont}}_t=[f_1(\boldsymbol{\phi}_t),\dots,f_p(\boldsymbol{\phi}_t)]^{T}$ and, for each categorical parameter, a probability vector $\hat{\boldsymbol{\pi}}^{(k)}_t=g_k(\boldsymbol{\phi}_t)$.
The raw outputs are then projected onto a feasible configuration to guarantee schema validity, where the continuous parameters solve a small quadratic program,
\begin{equation}
\tilde{\mathbf{c}}^{\text{cont}} = \arg\min_{\mathbf{z}} \;\|\mathbf{z}-\hat{\mathbf{c}}^{\text{cont}}\|_2^2 \quad \text{s.t.}\; \ell_j \le z_j \le u_j,\;\; \mathbf{A}\mathbf{z}\le \mathbf{b}, \nonumber
\end{equation}
and categorical parameters take the most probable label, i.e., $\tilde{c}^{\text{cat}}_k=\arg\max_{v\in\mathcal{V}_k}\hat{\pi}^{(k)}_{t,v}$.

\begin{table}[t]
\centering
\caption{Configuration inversion by identifiability class under the group-by-run split over the 300-run sweep, in mean absolute percentage error (MAPE, \%) for continuous parameters and exact-match rate for categorical ones, as five-seed means.
Bold marks the better estimator, or both where their MAPE agrees to within one point.}
\label{tab:identifiability}
\renewcommand{\arraystretch}{1.0}\footnotesize
\begin{tabular}{llcc}
\toprule
\textbf{Class} & \textbf{Parameter} & \textbf{XGBoost} & \textbf{Transformer} \\
\midrule
\multirow{3}{*}{Strong} & Packet size & \textbf{4.4} & 26.7 \\
 & Bandwidth & \textbf{10.8} & 24.3 \\
 & Center frequency & \textbf{13.6} & \textbf{14.0} \\
\midrule
\multirow{6}{*}{Moderate} & Inter-site dist.\ (UE) & \textbf{18.7} & 27.8 \\
 & Report periodicity & 23.8 & \textbf{11.4} \\
 & Outage threshold & \textbf{32.1} & \textbf{32.5} \\
 & Inter-site dist.\ (cell) & \textbf{35.5} & \textbf{35.3} \\
 & Handover SINR margin & 39.0 & -- \\
 & Max UE speed & \textbf{40.4} & \textbf{39.6} \\
\midrule
Weak & Min UE speed & \textbf{86.2} & 89.3 \\
\midrule
Non-ident. & Buffer size & 146.6 & \textbf{126.3} \\
\midrule
\multirow{2}{*}{Categorical} & Mobility model & \textbf{0.59} & 0.36 \\
 & Handover mode & 0.86 & \textbf{0.99} \\
\bottomrule
\end{tabular}
\end{table}

\myparatight{Configuration parameter identifiability}
The inferred configuration $\tilde{\mathbf{c}}$ is the projected output of the trained models and need not be unique.
The KPM-to-configuration map is only partially identifiable, meaning that some parameters cannot be recovered from logged KPMs at all.
Table~\ref{tab:identifiability} partitions the parameters into four identifiability classes, strong, moderate, weak, and non-identifiable, by how strongly a logged KPM responds to each.
Two mechanisms push a parameter toward the weaker classes:
1) A parameter whose control rule never activates has no effect on the KPMs and therefore cannot be inferred from them.
The handover SINR margin behaves this way under time-to-trigger modes that never reach the threshold comparison it governs, so its entry in Table~\ref{tab:identifiability} comes from a separate sweep in which that rule is active.
2) A parameter whose effect never appears under the offered load is equally invisible.
The configured buffer size behaves this way under saturated traffic, since observed occupancy stays far below the configured maximum, while weak parameters do affect the KPMs and the offered load only partly masks that effect.
Because the map is only \textit{partially} identifiable, \algname does not treat the inferred configuration as correct by itself and instead relies on the guarantees of Section~\ref{sec:theory}.

\subsection{Telemetry Certification by Re-Simulation}
\label{sec:refine}
The inverse map is trained on simulated pairs, so its initial output must be certified against the deployment's own telemetry (step {\Large \ding{174}}) before the DT goes live.
This stage therefore re-simulates candidate configurations, scores each candidate's KPM vector against the KPMs observed from the RAN with the deviation score of Section~\ref{sec:dev-score} formed from raw deviations, and keeps the candidate with the lowest score.
The search is motivated by covariance-matrix adaptation~\cite{hansen2016cmaes}.
It scales per-coordinate step sizes by identifiability class and evaluates every candidate under the same fixed random seed across generations, so an improvement carried across generations is not an artifact of the random draw.\footnote{In practice, the search plateaus within one or two generations, about twenty re-simulation runs.}
This stage certifies that the DT reproduces the observed telemetry rather than recovering the true configuration, because many configurations reproduce the same KPMs, as Section~\ref{sec:results} quantifies.

\section{Online DT Calibration for Fidelity and Freshness}
\label{sec:calibration}
\subsection{Adaptive DT-Fidelity Calibrator}
\label{sec:rls}
The backbone simulator produces KPMs under inferred configuration parameters (step {\Large \ding{175}} in Fig.~\ref{fig:framework}), and those KPMs differ from physical measurements even at start-up, since the simulator represents the radio environment in simplified form.
This gap between simulation and reality widens as the physical network evolves, so \algname continuously calibrates DT output toward the physical measurements (step {\Large \ding{176}}).
To synchronize the DT at this KPM level, we run a small set of lightweight calibration filters in parallel, called experts in online learning, and blend their outputs with weights that shift toward whichever expert has recently matched the physical KPMs best.
Specifically, recursive least squares (RLS) with forgetting needs few samples, which suits the low rate at which the E2 interface reports KPMs, and it tracks the gradual load and mobility evolution that dominates steady operation.
The Kalman expert instead re-adapts within a few samples when a traffic change or handover causes an abrupt shift.
The set is extensible, yet these two experts cover the regimes O-RAN telemetry exhibits.

\myparatight{RLS expert}
The first expert calibrates the digital-to-physical residual with an RLS filter.
Let $x_t$ denote a physical KPM and $y_t$ the corresponding raw DT output.
We approximate the time-varying relation $x_t\approx g(y_t,y_{t-1},\dots,y_{t-L},\bar{y}_t)$ with a linear adaptive filter using the feature vector
\begin{equation}
\mathbf{u}_t = \big[\,1,\, y_t,\, y_{t-1},\dots, y_{t-L},\, \bar{y}_t \big]^{T} \in \mathbb{R}^{L+3},
\end{equation}
and one-step calibrated estimate $\mathbf{u}_t^{T}\boldsymbol{\theta}_{t-1}$, where $\boldsymbol{\theta}_t\in\mathbb{R}^{L+3}$ is the time-varying coefficient vector and $\bar{y}_t$ is the $W$-sample moving average of DT outputs, formed as in Section~\ref{sec:kpm-preproc}.
With forgetting factor $\lambda\in(0,1)$ and inverse-covariance matrix $\mathbf{P}_t$, the RLS updates are
\begin{align}
\mathbf{k}_t &= \frac{\mathbf{P}_{t-1}\mathbf{u}_t}{\lambda + \mathbf{u}_t^{T}\mathbf{P}_{t-1}\mathbf{u}_t}, \\
\boldsymbol{\theta}_t &= \boldsymbol{\theta}_{t-1} + \mathbf{k}_t \big(x_t - \mathbf{u}_t^{T}\boldsymbol{\theta}_{t-1}\big), \\
\mathbf{P}_t &= \frac{1}{\lambda}\!\left(\mathbf{P}_{t-1} - \mathbf{k}_t \mathbf{u}_t^{T} \mathbf{P}_{t-1}\right),
\end{align}
initialized with a regularization term via $\mathbf{P}_0=\rho^{-1}\mathbf{I}$ to stabilize early updates when inputs are correlated.
Equivalently, $\boldsymbol{\theta}_t$ minimizes the exponentially weighted, $\ell_2$-regularized objective $\sum_{s=1}^{t}\lambda^{t-s}(x_s-\mathbf{u}_s^{T}\boldsymbol{\theta})^2+\lambda^{t}\rho\|\boldsymbol{\theta}\|_2^2$, applied per KPM in parallel.
The forgetting factor $\lambda$ governs the speed-stability trade-off that Section~\ref{sec:theory} analyzes.

\myparatight{Kalman expert}
The second expert uses a Kalman filter to track the scale and offset between DT output and physical measurement as a random walk.
It re-adapts to a level shift faster than exponential forgetting, because a step change enters directly through its process-noise covariance.
It models the digital-to-physical relation as $x_t=\alpha_t y_t+\beta_t+\varepsilon_t$, where the scale-offset pair $(\alpha_t,\beta_t)$ evolves as a random walk and $\varepsilon_t$ is the observation noise.
Its recursion mirrors the gain, coefficient-update, and covariance steps written above for RLS, with the forgetting factor $\lambda$ replaced by an explicit process-noise covariance that sets how fast $(\alpha_t,\beta_t)$ may move and by the variance of $\varepsilon_t$ in the gain denominator.
RLS with exponential forgetting can itself be viewed as a Kalman filter whose process noise is induced by $\lambda$, so the two experts share the same tracking-error structure and differ only in how that process noise is set.

\myparatight{Fixed-share aggregation}
In plain terms, the calibrator tracks which expert has recently predicted the physical KPMs best and shifts its blend toward that expert, without ever discarding any expert entirely.
At each step, every expert emits its one-step-ahead estimate before the physical KPM $x_t$ is revealed, and the calibrator emits the weight-blended estimate $\hat{y}^{\text{b}}_t=\sum_k w_{k,t}\,\hat{y}^{(k)}_t$, written $\hat{y}_t$ hereafter.
When $x_t$ arrives, each expert is scored by the loss of the estimate it had already issued, and the weights are updated with a fixed-share rule~\cite{orabona2026online} adapted to our per-KPM setting.
A Hedge update $w_k\!\leftarrow\!w_k\exp(-\eta h_k)$ acts on a loss $h_k$ rescaled to $[0,1]$ within that step, and a share step then redistributes a fraction $\alpha_{\text{sh}}$ of each weight uniformly to the others, so $\alpha_{\text{sh}}$ sets the switching rate.
The share step keeps every weight at or above $\alpha_{\text{sh}}/(N-1)>0$, where $N$ is the number of experts, so an expert that had fallen to negligible weight can regain the lead within a number of favorable steps that grows only logarithmically in $1/\alpha_{\text{sh}}$.
Each loss is divided by a running scale estimate shared across that step's experts and truncated at one before the exponential update, so a single learning rate $\eta$ works for every KPM.

\subsection{DT-Fidelity Deviation Score}
\label{sec:dev-score}
The DT-fidelity deviation scoring stage of Fig.~\ref{fig:framework} (step {\Large \ding{178}}) forms the deviation score that both trust decisions use.
Resynchronization applies the score to the calibrated residual, and admission applies it to the residual measured after an action has run.
In conformal prediction terms this is a nonconformity score, a number that measures how unusual a new observation looks against past ones, and a larger deviation means a larger mismatch between the DT and the physical network.
It is formed from relative errors of the calibrated outputs, scaled by $100$ so that scores read in percent,
\begin{equation}
e^{\text{cal}}_{m,t}
= 100\,\frac{\hat{y}^{(m)}_t - x^{(m)}_t}{\max(|x^{(m)}_t|,\epsilon_0)},
\end{equation}
with $\epsilon_0>0$ guarding against small denominators.
The raw counterpart $e^{\text{raw}}_{m,t}$ replaces $\hat{y}^{(m)}_t$ with the uncalibrated output $y^{(m)}_t$.
The score combines the absolute per-KPM deviations $d_{m,t}$ under nonnegative weights, with $d_{m,t}=e^{\text{cal}}_{m,t}$ for the online loop and $d_{m,t}=e^{\text{raw}}_{m,t}$ when certifying construction,
\begin{equation}
s_t = \frac{\sum_{m} w_m\,|d_{m,t}|}{\sum_{m} w_m},
\label{eq:global_dev}
\end{equation}
where a practical choice sets $w_m$ proportional to an operator-chosen priority divided by the running scale $\widehat{\sigma}_m$, an estimate of the typical magnitude, so that inherently high-variance KPMs do not dominate.
The monitored set includes the cell-load and SINR KPMs, so the same score measures both DT fidelity and the behavior the safe region constrains.

\subsection{Resynchronization Detector}
\label{sec:edetector}
A persistently large deviation score signals \textit{drift}, i.e., a configuration or environment change that moves the deployment away from the behavior the DT reproduces.
After such a change, the inferred configuration parameters no longer explain the physical network.
The DT should then be resynchronized, which re-inverts the configuration parameters, restarts the backbone simulator, and resets the calibrator and the detector.
The supervisor engine inspects the score after every KPM indication, and a monitor with a fixed threshold would raise ever more false alarms under such frequent checking.
We therefore design a resynchronization detector (step {\Large \ding{179}}) that accumulates evidence against the DT in a running statistic $E_t$, an e-process~\cite{shin2022edetectors}.
This statistic stays small while nothing changes, grows once the deviation stream drifts, and can be inspected at any moment without inflating the false-alarm rate, a property termed anytime-validity in sequential testing~\cite{ramdas2023gametheoretic}.
We first standardize the deviation stream to $z_t=(s_t-\mu_0)/\sigma_0$ using robust statistics gathered over a period that contains no genuine change, a regime we call in-control.
We then construct a nonnegative mixture e-process over a grid of hypothesized one-sided shifts $\{\delta_k\}_{k=1}^{K}$ with weights $\{\pi_k\}$ summing to one, one-sided since only an increase in deviation signals drift,
\begin{equation}
E_t = \sum_{k=1}^{K} \pi_k \prod_{r=\nu_0+1}^{t}\!\exp\!\big(\delta_k z_r - \delta_k^2/2\big),
\quad E_{\nu_0}=1,
\label{eq:eprocess}
\end{equation}
where each factor is the likelihood ratio of a mean shift of size $\delta_k$ against the no-change hypothesis $H_0$, and $\nu_0$ is the last reset.
Under $H_0$ the process $\{E_t\}$ never grows in expectation, $\mathbb{E}_{H_0}[E_t]\le 1$, and the detector raises a resynchronization alarm the first time it crosses the threshold, at $\tau = \inf\{t>\nu_0 : E_t \ge 1/\alpha_{\mathrm{res}}\}$.
Here an epoch is the stretch between two detector resets.
Since $E_t$ only \textit{rarely} reaches a large value, at most an $\alpha_{\mathrm{res}}$ fraction of in-control epochs ends in a false resynchronization under any inspection schedule, as Section~\ref{sec:theory} formalizes.
The mixture over the shift grid $\{\delta_k\}$ removes the need to know the post-change magnitude.

\section{Trustworthy Action Admission}
\label{sec:trust}
\label{sec:gate}
Once the DT is constructed and continuously synchronized, an xApp action can be admitted only after it is checked to be safe, so the gate (step {\Large \ding{180}}) guards the E2 control path in Fig.~\ref{fig:framework}.
We therefore calibrate an online conformal quantile of the deviation score, a threshold learned from recent scores so that all but a prescribed fraction fall below it.
Throughout, coverage denotes how often the calibrated prediction set contains the realized outcome, the fraction that the target $1-\alpha_{\mathrm{gate}}$ prescribes.
This is a statistical quantity unrelated to radio coverage.
The DT advances in segments, where a segment denotes one block of simulated time that the backbone simulator executes without pausing and that carries at most one applied control action.
The safe region $\mathcal{S}$ is the set of post-action KPM vectors declared acceptable for deployment, such as every served UE staying above an SINR floor and cell load staying below a cap.
Inspired by adaptive conformal inference (ACI)~\cite{gibbs2021adaptive}, we track the calibrated quantile online on the gate's own post-action deviation $s^{\mathrm{gate}}_t$ and use it as the radius of the prediction set as below:
\begin{equation}
q_{t+1} = \max\{0,\, q_t + \gamma(\mathrm{err}_t - \alpha_{\mathrm{gate}})\},
\label{eq:aci}
\end{equation}
where $\mathrm{err}_t = \mathbf{1}\{s^{\mathrm{gate}}_t > q_t\}$ flags a miscoverage, the step size $\gamma$ sets how fast the radius moves, and the projection keeps the radius nonnegative, matching the relative error it bounds.
The radius $q_1$ is initialized at the empirical $(1-\alpha_{\mathrm{gate}})$ quantile of an in-control calibration window, the first stretch of scores collected before the gate begins admitting.
For a candidate action $a$, the DT's calibrated forward model (step {\Large \ding{177}}) predicts the post-action KPM vector $\hat{\mathbf{y}}_t(a)$, around which a calibrated prediction set is constructed with radius $q_t$,
\begin{equation}
C_t(a) = \big\{\mathbf{y} : \varrho\big(\mathbf{y},\hat{\mathbf{y}}_t(a)\big) \le q_t \big\},
\label{eq:confset}
\end{equation}
where $\varrho$ carries the weights of~\eqref{eq:global_dev} but divides each per-KPM residual by the prediction $\max(|\hat{y}^{(m)}_t(a)|,\epsilon_0)$ rather than by the realization, so the gate evaluates the set before the outcome arrives.
The gate then scores the action that the physical network actually ran, the admitted $a_t$ or the null action that leaves the network untouched, as $s^{\mathrm{gate}}_t=\varrho(\mathbf{x}^{\mathrm{post}}_t,\hat{\mathbf{y}}_t(a_t))$ on its realized post-action KPMs.
Eq.~\eqref{eq:aci} therefore compares that deviation against the same radius $q_t$ that admitted the action, before advancing the radius.
Scores are truncated at the declared range $B$, the largest deviation an operator asks the gate to distinguish, so $s^{\mathrm{gate}}_t$ stays in $[0,B]$.
The forward model predicts unchanged KPMs for an action that relocates no load, such as sleeping an already-idle cell.
A sleep action that does relocate load invokes a closed-form load-transfer model, which concentrates the traffic of a slept cell onto its least-loaded surviving neighbor, inflates that traffic by a factor for the lower modulation and coding scheme transferred users fall back to, and lowers their SINR by a fixed margin.

Running each what-if evaluation through this calibrated forward model rather than through a fresh simulation run preserves the gate's false-approval guarantee, since the gate bound derived in Section~\ref{sec:theory} is distribution-free and, for bounded scores, does not depend on the accuracy of that model.
The gate admits $a$ if and only if the entire prediction set lies inside the safe region, $C_t(a) \subseteq \mathcal{S}$.
Because an E2 control acknowledgment can be encoded inconsistently across independent implementations, an admitted action is confirmed at the RAN scheduler rather than at the acknowledgment.
Theoretical analysis below shows that under this pairing every false approval is a miscoverage event, so the long-run false-approval rate per decision step is bounded by the gate's miscoverage rate, which for the fixed-$\gamma$ update stays at most $\alpha_{\mathrm{gate}}+O(B/\gamma T)$ under arbitrary drift.

%% file: sections/theory.tex
\section{Theoretical Analysis}
\label{sec:theory}
Here, we prove the guarantees that keep each trust decision of \algname within its prescribed error budget, starting from two assumptions.
In operational terms, Theorem~\ref{thm:rls} bounds how closely the calibrator tracks a drifting deployment, and Theorem~\ref{thm:arl} bounds both how often resynchronization fires without cause and how long a genuine drift stays undetected.
Theorem~\ref{thm:gate} then bounds how often the gate admits an unsafe xApp action, and Corollary~\ref{cor:duty} combines these bounds into the fraction of time the DT runs degraded.

\begin{assumption}[Tracking regularity]
\label{as:track}
Let $\boldsymbol{\theta}^\star_t$ denote the instantaneous optimal calibration coefficients for the per-KPM RLS calibrator of Section~\ref{sec:rls} and $v_t=x_t-\mathbf{u}_t^{T}\boldsymbol{\theta}^\star_t$ the residual noise.
Then (i) within any interval between two consecutive alarms the input covariance $\mathbf{R}=\mathbb{E}[\mathbf{u}_t\mathbf{u}_t^{T}]$ is constant and positive definite with $\mathbf{R}\succeq\rho_0\mathbf{I}$ for some $\rho_0>0$, (ii) the optimal coefficients follow a random walk with independent zero-mean increments and $\mathbb{E}\|\boldsymbol{\theta}^\star_t-\boldsymbol{\theta}^\star_{t-1}\|^2\le\Delta^2$, (iii) the residual satisfies $\mathbb{E}[v_t\mid\mathcal{F}_{t-1},\mathbf{u}_t]=0$ with variance $\sigma^2$, where $\mathcal{F}_{t}$ is the natural filtration of the KPM stream, and (iv) the coefficient error $\boldsymbol{\theta}_{t-1}-\boldsymbol{\theta}^\star_{t-1}$ is independent of the current regressor $\mathbf{u}_t$, which is the independence approximation standard in tracking analysis of adaptive filters.
\end{assumption}

In plain terms, Assumption~\ref{as:track} asks for a few mild properties of the KPM stream: the deployment drifts by a bounded amount per step, the measurement noise averages to zero, and the calibrator inputs vary enough to pin down the coefficients.

\begin{assumption}[In-control sub-Gaussian increment]
\label{as:incontrol}
Under the in-control regime $H_0$ that holds between two genuine changes, the standardized deviation $z_t$ of Section~\ref{sec:edetector} is conditionally $1$-sub-Gaussian with non-positive mean, $\mathbb{E}_{H_0}[\exp(\xi z_t)\mid\mathcal{F}_{t-1}]\le \exp(\xi^2/2)$ for all $\xi\ge 0$, the standard in-control condition of sequential change detection~\cite{shin2022edetectors, ramdas2023gametheoretic}.
\end{assumption}

The condition says only that large upward jumps of the standardized deviation stay rare while the DT matches the deployment, and Section~\ref{sec:results} checks it on a public KPM trace.


\subsection{Tracking Floors of the Adaptive DT Calibrator}
\label{sec:adaptivefloor}

\begin{theorem}[Tracking-error floor and optimal $\lambda$]
\label{thm:rls}
Under Assumption~\ref{as:track}, the steady-state tracking error of the RLS calibrator with forgetting factor $\lambda\in(0,1)$ within a sufficiently long regime satisfies
\begin{equation}
\mathbb{E}\|\boldsymbol{\theta}_t-\boldsymbol{\theta}^\star_t\|^2 \;\lesssim\; C_1(1-\lambda)\sigma^2 + \frac{C_2\,\Delta^2}{1-\lambda},
\label{eq:rlsbound}
\end{equation}
where $C_1,C_2>0$ depend on $\rho_0$ and the filter dimension, and minimizing the right-hand side gives the optimal forgetting factor $1-\lambda^\star=\sqrt{C_2/C_1}\,\Delta/\sigma$ whenever $\sqrt{C_2/C_1}\,\Delta/\sigma<1$, at which the bound equals $2\sqrt{C_1 C_2}\,\sigma\,\Delta$.
\end{theorem}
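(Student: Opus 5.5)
The plan is the classical tracking analysis of exponentially weighted RLS in the style of Eweda/Macchi and Haykin. We write the coefficient error $\tilde{\boldsymbol{\theta}}_t=\boldsymbol{\theta}_t-\boldsymbol{\theta}^\star_t$ and derive its recursion.

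Substituting $x_t=\mathbf{u}_t^{T}\boldsymbol{\theta}^\star_t+v_t$ into the update gives
\begin{equation*}
\tilde{\boldsymbol{\theta}}_t=(\mathbf{I}-\mathbf{k}_t\mathbf{u}_t^{T})\tilde{\boldsymbol{\theta}}_{t-1}+\mathbf{k}_t v_t-(\mathbf{I}-\mathbf{k}_t\mathbf{u}_t^{T})\boldsymbol{\omega}_t,
\end{equation*}
where $\boldsymbol{\omega}_t=\boldsymbol{\theta}^\star_t-\boldsymbol{\theta}^\star_{t-1}$ is the random-walk increment of Assumption~\ref{as:track}(ii). This is the exact recursion once $\boldsymbol{\omega}_t$ is carried through the residual.

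Next we replace the random gain by its steady-state average. Within a long regime, the normalized matrix $(1-\lambda)\mathbf{P}_t^{-1}=(1-\lambda)\sum_{s}\lambda^{t-s}\mathbf{u}_s\mathbf{u}_s^{T}+\cdots$ concentrates around $\mathbf{R}$. We therefore use the standard approximation $\mathbf{P}_t\approx(1-\lambda)\mathbf{R}^{-1}$ and $\mathbf{k}_t\approx(1-\lambda)\mathbf{R}^{-1}\mathbf{u}_t$. The regularizer $\lambda^t\rho$ vanishes in steady state, and Assumption~\ref{as:track}(i) gives $\|\mathbf{R}^{-1}\|\le 1/\rho_0$.

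With this gain, the error covariance $\mathbf{K}_t=\mathbb{E}[\tilde{\boldsymbol{\theta}}_t\tilde{\boldsymbol{\theta}}_t^{T}]$ obeys a Lyapunov-type recursion. The cross terms vanish for three reasons: $v_t$ is a martingale difference given $\mathcal{F}_{t-1}$ and $\mathbf{u}_t$ by (iii); $\boldsymbol{\omega}_t$ is zero-mean and independent of the past by (ii); and $\tilde{\boldsymbol{\theta}}_{t-1}$ is independent of $\mathbf{u}_t$ by (iv), so that $\mathbb{E}[\mathbf{u}_t\mathbf{u}_t^{T}]=\mathbf{R}$ factors out. To first order in $1-\lambda$ this yields
\begin{equation*}
\mathbf{K}_t\approx\lambda^2\mathbf{K}_{t-1}+(1-\lambda)^2\sigma^2\mathbf{R}^{-1}+\mathbb{E}[\boldsymbol{\omega}_t\boldsymbol{\omega}_t^{T}].
\end{equation*}

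Taking traces and the fixed point, with $1-\lambda^2\approx 2(1-\lambda)$, gives
\begin{equation*}
\mathrm{tr}\,\mathbf{K}\lesssim\frac{(1-\lambda)\sigma^2\,\mathrm{tr}(\mathbf{R}^{-1})}{2}+\frac{\Delta^2}{2(1-\lambda)}.
\end{equation*}
This matches \eqref{eq:rlsbound} with $C_1=\mathrm{tr}(\mathbf{R}^{-1})/2\le(L+3)/(2\rho_0)$ and $C_2=1/2$, up to constants absorbed by $\lesssim$. The bound depends only on $\rho_0$ and the filter dimension, as claimed.

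The optimization is elementary. Write $\epsilon=1-\lambda$ and $f(\epsilon)=C_1\epsilon\sigma^2+C_2\Delta^2/\epsilon$, which is convex on $(0,\infty)$. Setting $f'(\epsilon)=0$ gives $\epsilon^\star=\sqrt{C_2/C_1}\,\Delta/\sigma$. This is admissible, i.e.\ $\lambda^\star\in(0,1)$, exactly when $\epsilon^\star<1$. Substituting back gives $f(\epsilon^\star)=2\sqrt{C_1C_2}\,\sigma\Delta$ by AM--GM.

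The main obstacle is justifying the averaged-gain step $\mathbf{P}_t\approx(1-\lambda)\mathbf{R}^{-1}$ and the dropping of higher-order terms in $(1-\lambda)$; this is why the theorem is stated with $\lesssim$ and for a sufficiently long regime. I would handle it first by invoking the independence approximation (iv) together with a law-of-large-numbers argument for the exponentially weighted sample covariance, whose effective window is $1/(1-\lambda)$ samples. The deviation of $(1-\lambda)\mathbf{P}_t^{-1}$ from $\mathbf{R}$ is then $O(\sqrt{1-\lambda})$ and contributes only higher-order terms.

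"Sufficiently long regime" is made precise by requiring the transient factor $\lambda^{2(t-\nu_0)}\mathbf{K}_{\nu_0}$, inherited from the last reset, to be negligible. This holds once $t-\nu_0\gg 1/(1-\lambda)$.
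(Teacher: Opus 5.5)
Your proposal is correct and follows the same route as the paper's sketch. Both use the error recursion with the drift increment passed through $\mathbf{I}-\mathbf{k}_t\mathbf{u}_t^{T}$, take expectations under Assumption~\ref{as:track}(i) and~(iv) to control the contraction and the gain covariance, split the steady-state covariance into a noise term and a lag term, and finish with AM--GM over $\lambda$. Your version additionally spells out the averaged-gain step $\mathbf{P}_t\approx(1-\lambda)\mathbf{R}^{-1}$ and gives explicit constants $C_1=\mathrm{tr}(\mathbf{R}^{-1})/2\le(L+3)/(2\rho_0)$ and $C_2=1/2$, which the paper leaves implicit.
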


The first term of Eq.~\eqref{eq:rlsbound} grows when the calibrator forgets too fast and chases noise, and the second grows when it forgets too slowly and lags the moving deployment.
Their optimized sum is the calibrator's tracking floor, the smallest error the analysis certifies rather than a limit on achievable error.
Thus the best forgetting factor scales with the ratio $\Delta/\sigma$ of drift to noise, so the calibrator forgets faster when the deployment moves quickly and more slowly when measurements are noisy.

\begin{proof}[Proof sketch]
The tracking error $\tilde{\boldsymbol{\theta}}_t=\boldsymbol{\theta}_t-\boldsymbol{\theta}^\star_t$ obeys the recursion $\tilde{\boldsymbol{\theta}}_t=(\mathbf{I}-\mathbf{k}_t\mathbf{u}_t^{T})\tilde{\boldsymbol{\theta}}_{t-1}-(\mathbf{I}-\mathbf{k}_t\mathbf{u}_t^{T})(\boldsymbol{\theta}^\star_t-\boldsymbol{\theta}^\star_{t-1})+\mathbf{k}_t v_t$, and taking expectations under Assumption~\ref{as:track}(i) and~(iv) bounds the contraction of $\mathbf{I}-\mathbf{k}_t\mathbf{u}_t^{T}$ and the gain covariance $\mathbb{E}[\mathbf{k}_t\mathbf{k}_t^{T}]$.
The steady-state error covariance then separates into the lag and noise terms of Eq.~\eqref{eq:rlsbound}, adapting standard RLS tracking analysis to the calibration setting, and the arithmetic-geometric-mean inequality over $\lambda$ yields the optimum.
\end{proof}

The fixed-share aggregator of Section~\ref{sec:rls} inherits the standard tracking-regret bound~\cite{orabona2026online}: against any reference strategy that switches experts $k$ times over $T$ rounds, its average loss exceeds the per-block best expert's by a slack that vanishes as $T$ grows.
Letting that reference switch at the resynchronization boundaries of Section~\ref{sec:edetector} makes $k$ at most the number of epochs, so tracking and delay bounds compose along the same regime boundaries.

\subsection{Resynchronization Guarantee}

\begin{theorem}[False-alarm control under any inspection schedule and detection delay]
\label{thm:arl}
Consider the mixture e-process $E_t$ of Eq.~\eqref{eq:eprocess} over the one-sided grid $\{\delta_k>0\}$ with the alarm time $\tau=\inf\{t>\nu_0 : E_t \ge 1/\alpha_{\mathrm{res}}\}$, and let Assumption~\ref{as:incontrol} hold.
(i) The process $\{E_t\}$ is a nonnegative supermartingale with $\mathbb{E}_{H_0}[E_t]\le 1$, so Ville's inequality~\cite{ramdas2023gametheoretic} gives $\mathbb{P}_{H_0}(\tau<\infty)\le \alpha_{\mathrm{res}}$ for any inspection schedule that depends only on the data observed so far.
(ii) If a change at time $\nu$ coincides with the last reset, $\nu=\nu_0$, so that $E_\nu=1$, and the post-change increments are i.i.d.\ with finite second moment and $\mathbb{E}[\delta_{k^\star} z_t-\delta_{k^\star}^2/2]\ge I>0$ for some grid shift $\delta_{k^\star}$, then
\begin{equation}
\mathbb{E}[\tau-\nu\mid\tau\ge\nu]\;\le\;\frac{\ln(1/\alpha_{\mathrm{res}})+c_\pi+\kappa}{I},
\label{eq:delay}
\end{equation}
with $c_\pi=\ln(1/\pi_{k^\star})$ the cost of the mixture weight placed on that component and $\kappa$ the expected overshoot, the amount by which the matched component's log-likelihood walk passes the alarm boundary at the crossing step.
The overshoot depends on the post-change increment law and the shift size, yet stays bounded as the threshold grows.
\end{theorem}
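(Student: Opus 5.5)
For part (i), I will show that each mixture component is a nonnegative supermartingale under $H_0$. Define
\[
M^{(k)}_t=\prod_{r=\nu_0+1}^{t}\exp(\delta_k z_r-\delta_k^2/2), \qquad M^{(k)}_{\nu_0}=1 .
\]
Since $\delta_k>0$, I can take $\xi=\delta_k$ in Assumption~\ref{as:incontrol}, which gives
\[
\mathbb{E}_{H_0}\!\left[\exp(\delta_k z_t-\delta_k^2/2)\mid\mathcal{F}_{t-1}\right]\le 1 .
\]
Hence $\mathbb{E}_{H_0}[M^{(k)}_t\mid\mathcal{F}_{t-1}]\le M^{(k)}_{t-1}$. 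This is exactly where one-sidedness matters: the assumption only covers nonnegative $\xi$, and a non-positive mean only helps. A convex combination with weights $\pi_k$ summing to one keeps the supermartingale property and the initial value $E_{\nu_0}=1$. Iterating expectations then gives $\mathbb{E}_{H_0}[E_t]\le 1$.

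Ville's inequality for nonnegative supermartingales then gives
\[
\mathbb{P}_{H_0}\!\left(\sup_{t>\nu_0}E_t\ge 1/\alpha_{\mathrm{res}}\right)\le\alpha_{\mathrm{res}},
\]
and the event $\{\tau<\infty\}$ is contained in this one. For the inspection-schedule claim, any data-dependent schedule inspects a subset of times. Its alarm therefore fires only if the sup over all times crosses, so the same bound holds. Equivalently, optional stopping at any stopping time $\sigma$ gives $\mathbb{E}[E_\sigma]\le 1$, and Markov's inequality finishes it.

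For part (ii), I reduce to a single component. Since $E_t\ge \pi_{k^\star}M^{(k^\star)}_t$, the alarm time satisfies $\tau\le\tau^\star$, where
\[
\tau^\star=\inf\{t>\nu : S_t\ge \ln(1/\alpha_{\mathrm{res}})+c_\pi\}, \qquad S_t=\sum_{r=\nu+1}^{t}\big(\delta_{k^\star}z_r-\delta_{k^\star}^2/2\big).
\]
Because $\nu=\nu_0$, the walk starts at $S_\nu=0$ with $E_\nu=1$, and the conditioning on $\tau\ge\nu$ is trivial. After the change, $S_t$ is a random walk with i.i.d.\ increments of mean $\mu\ge I>0$ and finite variance. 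Wald's identity applied at $\tau^\star$ gives $\mathbb{E}[S_{\tau^\star}]=\mu\,\mathbb{E}[\tau^\star-\nu]$. To use it I need $\mathbb{E}[\tau^\star]<\infty$, which is standard for a positive-drift walk with finite second moment crossing a level; I would cite renewal theory or truncate $\tau^\star\wedge n$ and apply monotone convergence. Writing
\[
S_{\tau^\star}=\ln(1/\alpha_{\mathrm{res}})+c_\pi+R, \qquad \kappa=\mathbb{E}[R],
\]
I get
\[
\mathbb{E}[\tau-\nu]\le\mathbb{E}[\tau^\star-\nu]=\frac{\ln(1/\alpha_{\mathrm{res}})+c_\pi+\kappa}{\mu}\le\frac{\ln(1/\alpha_{\mathrm{res}})+c_\pi+\kappa}{I}.
\]

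The main obstacle is the claim that $\kappa$ stays bounded as the threshold grows. Here I will invoke Lorden's bound on the expected overshoot of a random walk: the overshoot is at most $\mathbb{E}[(X^+)^2]/\mathbb{E}[X]$ for increments $X$. The finite-second-moment hypothesis is precisely what makes this finite and independent of the level. This also shows $\kappa$ depends only on the post-change increment law and on $\delta_{k^\star}$, as the statement asserts. A secondary subtlety is that the paper's $\tau$ is measured from a reset; if alarms of other components occur earlier, that only shortens $\tau$, so the bound is preserved.
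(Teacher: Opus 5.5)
Your proposal is correct and follows the same route as the paper's sketch. For (i), you use Assumption~\ref{as:incontrol} with $\xi=\delta_k$ to make each component a supermartingale, note that the mixture preserves this, and apply Ville's inequality; for (ii), you use the domination $E_t\ge\pi_{k^\star}M^{(k^\star)}_t$ and apply Wald's identity at the matched component's first passage. Your appeal to Lorden's bound $\mathbb{E}[(X^+)^2]/\mathbb{E}[X]$ supplies the uniform-in-threshold control of the overshoot that the paper only asserts; strictly, it is this bound, not $\kappa$ itself, that does not depend on the threshold.
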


In other words, within each epoch the chance of ever raising a spurious resynchronization stays at most $\alpha_{\mathrm{res}}$, however often the supervisor inspects the deviation stream.
Thus a genuine drift is caught within a delay that grows only logarithmically in $1/\alpha_{\mathrm{res}}$ and shrinks as $I$ grows, where $I$ measures how much evidence each post-change KPM sample adds.
A larger drift makes $I$ larger, so larger drifts are caught faster.

\begin{proof}[Proof sketch]
Under Assumption~\ref{as:incontrol} each factor $\exp(\delta_k z_t-\delta_k^2/2)$ has conditional mean at most one, so every component product is a nonnegative supermartingale and the finite mixture preserves this property, whereupon Ville's maximal inequality yields part (i).
After a change at the reset, the i.i.d.\ post-change increments make the matched component's log-likelihood a random walk with drift at least $I$, which $\ln E_t$ dominates up to the additive cost $c_\pi$, and Wald's identity with the overshoot term $\kappa$ bounds the expected first passage by Eq.~\eqref{eq:delay}, adapting the e-detector delay analysis of~\cite{shin2022edetectors} to this mixture.
\end{proof}

\subsection{Conformal Fidelity Gate Guarantee}

\begin{theorem}[Gate coverage and false-approval control]
\label{thm:gate}
Let the online quantile of the gate of Section~\ref{sec:gate} follow the projected ACI update Eq.~\eqref{eq:aci} with step size $\gamma>0$ and initial radius $q_1\in[0,B]$ on any possibly adversarially drifting gate-score sequence whose values are bounded, $s^{\mathrm{gate}}_t\in[0,B]$, and let the safe region $\mathcal{S}$ be fixed.
(i) The empirical miscoverage satisfies the one-sided bound
\begin{equation}
\tfrac{1}{T}\textstyle\sum_{t=1}^{T}\mathrm{err}_t-\alpha_{\mathrm{gate}}\;\le\;\frac{B+\gamma}{\gamma T}\;\xrightarrow[T\to\infty]{}\;0,
\label{eq:aci_cov}
\end{equation}
with no exchangeability assumption on the score sequence, that is, the scores may drift over time and their order may carry information.
(ii) For every action $a$, the containment $C_t(a)\subseteq\mathcal{S}$ is nonincreasing in the radius $q_t$, so the false-approval loss $\ell_t(a)=\mathbf{1}\{\text{admit }a\wedge\text{true post-action KPM}\notin\mathcal{S}\}$ is a monotone nonincreasing function of $q_t$.
(iii) Suppose the score $s^{\mathrm{gate}}_t$ is the deviation realized under the action applied at step $t$, and suppose Eq.~\eqref{eq:aci} compares that score against the same radius that admitted the action, as Algorithm~\ref{alg:OpenTwin} pairs them.
Then an applied action whose true post-action KPM falls outside $\mathcal{S}$ realizes a score above that radius, so every false approval is a miscoverage event and, with $\ell_t$ the false-approval loss of that action,
\begin{equation}
\tfrac{1}{T}\textstyle\sum_{t=1}^{T}\ell_t\;\le\;\tfrac{1}{T}\textstyle\sum_{t=1}^{T}\mathrm{err}_t\;\le\;\alpha_{\mathrm{gate}}+O\!\Big(\tfrac{B}{\gamma T}\Big).
\label{eq:crc}
\end{equation}
\end{theorem}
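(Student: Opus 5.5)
The plan is to prove the three parts in order. Part (i) is a deterministic telescoping argument in the style of adaptive conformal inference~\cite{gibbs2021adaptive}, part (ii) is a nesting argument on the sets of Eq.~\eqref{eq:confset}, and part (iii) reduces false approvals to miscoverage pointwise in $t$ and then invokes (i). No probabilistic assumption enters anywhere, which is why the bound holds for adversarial score sequences.

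For (i), the key step is a uniform bound on the radius: $0\le q_t\le B+\gamma$ for all $t$, proved by induction on $t$.
\begin{itemize}
\item The base case is $q_1\in[0,B]$.
\item If $q_t\le B$, then $q_{t+1}\le q_t+\gamma(1-\alpha_{\mathrm{gate}})\le B+\gamma$.
\item If $q_t\in(B,B+\gamma]$, then $s^{\mathrm{gate}}_t\le B<q_t$ forces $\mathrm{err}_t=0$, so the update in Eq.~\eqref{eq:aci} can only decrease the radius.
\end{itemize}
Next, the projection $\max\{0,\cdot\}$ only increases the unprojected value, so $q_{t+1}\ge q_t+\gamma(\mathrm{err}_t-\alpha_{\mathrm{gate}})$ for every $t$. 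Summing over $t=1,\dots,T$ and telescoping gives
\[
\gamma\textstyle\sum_{t=1}^{T}(\mathrm{err}_t-\alpha_{\mathrm{gate}})\le q_{T+1}-q_1\le B+\gamma .
\]
Dividing by $\gamma T$ yields Eq.~\eqref{eq:aci_cov}.

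For (ii), take $q\le q'$. The definition of Eq.~\eqref{eq:confset} gives $C_t(a)\big|_{q}\subseteq C_t(a)\big|_{q'}$, because the predicted center $\hat{\mathbf{y}}_t(a)$ and the score $\varrho$ do not depend on the radius. Hence if $C_t(a)\subseteq\mathcal{S}$ holds at $q'$, it also holds at $q$, so the admission indicator is nonincreasing in $q_t$. The event that the true post-action KPM lies outside $\mathcal{S}$ does not involve $q_t$ at all. Therefore $\ell_t(a)$, the product of these two indicators, is also nonincreasing in $q_t$.

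For (iii), I would fix $t$ and split into two cases.
\begin{itemize}
\item \emph{Null action applied.} Nothing was admitted, so $\ell_t=0\le\mathrm{err}_t$.
\item \emph{Admitted action $a_t$.} By admission, $C_t(a_t)\subseteq\mathcal{S}$ at radius $q_t$. If $\mathbf{x}^{\mathrm{post}}_t\notin\mathcal{S}$, then $\mathbf{x}^{\mathrm{post}}_t\notin C_t(a_t)$, which means $\varrho(\mathbf{x}^{\mathrm{post}}_t,\hat{\mathbf{y}}_t(a_t))>q_t$.
\end{itemize}
Since Algorithm~\ref{alg:OpenTwin} scores exactly this pair against the same $q_t$ before advancing the radius, we obtain $\mathrm{err}_t=1$. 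So $\ell_t\le\mathrm{err}_t$ holds pointwise, and averaging and applying (i) gives Eq.~\eqref{eq:crc}.

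The main obstacle I expect is the truncation of scores at $B$. When $q_t<B$ it is harmless, since $\min(\varrho,B)>q_t$ still holds. When $q_t\in[B,B+\gamma]$, however, a raw score above $q_t$ can be truncated to $B\le q_t$, and the miscoverage then goes unrecorded. I would try to close this case in one of two ways.
\begin{itemize}
\item Argue that an operator-declared $B$ covers every deviation relevant to $\mathcal{S}$. Then any radius $q_t\ge B$ produces a set $C_t(a)$ that is not contained in $\mathcal{S}$, so the gate admits nothing and $\ell_t=0$.
\item Failing that, adopt the convention that a truncated score counts as a miscoverage whenever its raw value exceeds $q_t$. This leaves the induction in (i) intact, because $\mathrm{err}_t=1$ is then only possible when $q_t<B$ or when a raw score genuinely exceeds $q_t$. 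In the latter case the radius stays within $B+\gamma$, provided the argument is rephrased with the threshold $\max(B,q_t)$.
\end{itemize}
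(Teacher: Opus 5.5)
Your argument is correct and follows the paper's own sketch. For (i) you telescope the projected update using $q_{T+1}-q_1\le B+\gamma$, which you justify by induction where the paper only asserts it. For (ii) you nest $C_t(a)$ in the radius to get the monotone-loss property, and for (iii) you use the pointwise reduction $\ell_t\le\mathrm{err}_t$.

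The truncation worry does not arise inside the theorem. The hypotheses $s^{\mathrm{gate}}_t\in[0,B]$ and $s^{\mathrm{gate}}_t=\varrho(\mathbf{x}^{\mathrm{post}}_t,\hat{\mathbf{y}}_t(a_t))$ together already force the realized deviation to be at most $B$; the mismatch between the algorithm's truncation and hypothesis (iii) is real, but the paper's sketch leaves it unaddressed too. You should drop fallback (b) regardless. Counting miscoverage on raw scores lets $q_t$ grow by $\gamma(1-\alpha_{\mathrm{gate}})$ per step without bound when raw scores are adversarially large, which destroys the $B+\gamma$ radius bound your telescoping needs.
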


In other words, updating the radius with a fixed step size keeps the long-run rate of unsafe admissions at or below the prescribed budget, so an operator who sets $\alpha_{\mathrm{gate}}=0.1$ accepts at most roughly one unsafe admission per ten decision steps in the long run.
The dynamically tuned variant~\cite{gibbs2022conformal} removes the need to choose a step size but does not inherit this bound.

\begin{proof}[Proof sketch]
Part (i) follows by telescoping Eq.~\eqref{eq:aci}, since bounded scores keep $q_{T+1}-q_1\le B+\gamma$ and the projection mass only lowers the averaged miscoverage, preserving the upper direction of the bound of~\cite{gibbs2021adaptive}.
Part (ii) is monotone-loss hypothesis of conformal risk control~\cite{angelopoulos2022conformal}, and part (iii) holds because an admitted action whose true outcome falls outside $\mathcal{S}$ realizes a score above the admitting radius, making every false approval a miscoverage event that part (i) bounds.
\end{proof}

\subsection{A Duty-Cycle Guarantee on Fidelity}
Duty cycle here denotes the fraction of KPM samples on which the DT runs degraded rather than tracking the deployment, and the term is unrelated to radio transmission.

\begin{corollary}[Fidelity duty cycle]
\label{cor:duty}
Under the assumptions of Theorems~\ref{thm:rls} and~\ref{thm:arl}, the aggregator's average loss within a between-drift regime exceeds the better expert's per-block steady-state loss by at most the vanishing fixed-share slack~\cite{orabona2026online}, and Theorem~\ref{thm:rls} gives that steady-state level for the RLS expert in the pool.
This loss, the calibrator's absolute error truncated at one, rises with the calibrated deviation of Section~\ref{sec:dev-score}, so below the truncation point a bound on either quantity carries to the other.
A genuine drift then degrades the DT for at most $\bar{D}+T_{\mathrm{resync}}$ samples in expectation, where $\bar{D}$ is the expected detection delay and $T_{\mathrm{resync}}$ is the regeneration window that the supervisor reserves for re-inverting configuration parameters, resetting the calibrator, and letting the deviation score settle again.
Theorem~\ref{thm:arl} bounds $\bar{D}$ by $D=(\ln(1/\alpha_{\mathrm{res}})+c_\pi+\kappa)/I$ for an onset that coincides with a reset.
A drift that begins at a mid-epoch time $\nu$ adds the term $\ln(1/E^{(k^\star)}_{\nu})$, the additional evidence the matched component $E^{(k^\star)}_t$ inside Eq.~\eqref{eq:eprocess} must accumulate because its value at time $\nu$ may sit below one.
Now suppose that genuine regime changes arrive on average every $T_{\mathrm{drift}}$ samples and that successive between-alarm epochs restart independently.
Each epoch ends in a spurious resynchronization with probability at most $\alpha_{\mathrm{res}}$, and each spurious alarm costs one regeneration window.
The long-run degraded fraction then stays at most $\min\{1,(\bar{D}+T_{\mathrm{resync}}+\tfrac{\alpha_{\mathrm{res}}}{1-\alpha_{\mathrm{res}}}T_{\mathrm{resync}})/T_{\mathrm{drift}}\}$.
\end{corollary}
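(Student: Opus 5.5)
The plan is to treat the corollary as a renewal-reward argument on the sequence of epochs, assembled from three ingredients proved earlier. \emph{First}, I will fix the within-regime fidelity level. The fixed-share tracking-regret bound, with the reference strategy allowed to switch only at resynchronization boundaries, shows that the aggregator's per-block average loss exceeds that of the better expert by a slack vanishing in the block length. Since the RLS expert is in the pool, Theorem~\ref{thm:rls} caps that better-expert loss by the tracking floor $2\sqrt{C_1C_2}\,\sigma\Delta$, and the same holds at a nonoptimal $\lambda$. The loss is the truncated absolute calibration error, and the deviation score $s_t$ of Eq.~\eqref{eq:global_dev} is a positively weighted average of absolute relative errors. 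A monotone comparison, holding whenever the per-KPM error lies below the truncation point and $|x^{(m)}_t|\ge\epsilon_0$ is bounded away from zero, therefore transfers the bound between the two quantities. This step only defines what ``not degraded'' means.

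\emph{Second}, I will bound the degraded time caused by one genuine drift. That time splits into the detection delay $\tau-\nu$ plus the regeneration window $T_{\mathrm{resync}}$. For an onset at a reset, Theorem~\ref{thm:arl}(ii) gives $\bar D\le D$. For a mid-epoch onset I will rerun the Wald argument on the matched component $E^{(k^\star)}_t=E^{(k^\star)}_\nu\prod_{r>\nu}\exp(\delta_{k^\star}z_r-\delta_{k^\star}^2/2)$. Because $E_t\ge\pi_{k^\star}E^{(k^\star)}_t$, the alarm fires once the post-$\nu$ log walk exceeds $\ln(1/\alpha_{\mathrm{res}})+c_\pi+\ln(1/E^{(k^\star)}_\nu)$, and the first-passage bound then adds exactly that extra term. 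I expect this to be the main obstacle: $E^{(k^\star)}_\nu$ is random and can be arbitrarily small, so the added term must be stated conditionally on $\mathcal F_\nu$. Conditioning works because the post-change increments are i.i.d.\ and independent of the past. Otherwise one can take expectations, using $\mathbb{E}[\ln(1/E^{(k^\star)}_\nu)]$, which is finite under the sub-Gaussian pre-change law. I will state the conditional version and absorb it into $\bar D$.

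\emph{Third}, I will account for spurious alarms. By Theorem~\ref{thm:arl}(i), applied afresh after each reset (the epochs restart independently), each in-control epoch ends in a false alarm with probability at most $\alpha_{\mathrm{res}}$. The number of spurious resynchronizations before the next genuine change is therefore dominated by a geometric count with mean at most $\sum_{j\ge1}\alpha_{\mathrm{res}}^j=\alpha_{\mathrm{res}}/(1-\alpha_{\mathrm{res}})$, and each costs $T_{\mathrm{resync}}$. The expected degraded time per drift cycle is thus at most $\bar D+T_{\mathrm{resync}}+\frac{\alpha_{\mathrm{res}}}{1-\alpha_{\mathrm{res}}}T_{\mathrm{resync}}$. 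The renewal-reward theorem, with cycles delimited by genuine changes of mean length $T_{\mathrm{drift}}$, then converts this into the long-run degraded fraction, and the $\min\{1,\cdot\}$ is trivial since a fraction cannot exceed one.
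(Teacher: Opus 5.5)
Your proposal is correct and follows essentially the paper's route. The paper gives no separate proof; the argument sits inside the corollary statement. It uses fixed-share slack plus Theorem~\ref{thm:rls} for the in-regime level, and charges each genuine drift its detection delay plus one regeneration window, with the extra $\ln(1/E^{(k^\star)}_\nu)$ evidence for a mid-epoch onset. It then counts spurious alarms geometrically, with mean at most $\alpha_{\mathrm{res}}/(1-\alpha_{\mathrm{res}})$, and divides by $T_{\mathrm{drift}}$, a step you make explicit through renewal--reward.

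One small correction to your aside: the one-sided sub-Gaussian condition controls the upper tail of $z_t$. Finiteness of $\mathbb{E}[\ln(1/E^{(k^\star)}_\nu)]$ therefore comes instead from $z_t\ge-\mu_0/\sigma_0$ (since $s_t\ge 0$) together with a finite pre-change duration. Moreover, this term grows at least like $(\nu-\nu_0)\delta_{k^\star}^2/2$ in expectation, which is why the final bound is stated with $\bar D$ rather than $D$.
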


Thus, \algname accepts a bounded fraction of degraded time in return for automatic recovery from genuine drifts.
Outside those windows, the calibrator tracks the deployment at its per-regime floor.

%% file: sections/eval.tex
\section{Evaluation Results}
\label{sec:results}
\subsection{Simulation-Based Experimental Setup}
\label{sub5.1}

\myparatight{Platform}
The physical network runs a complete 5G software stack, OpenAirInterface (OAI)~\cite{kaltenberger2020oai}, with a gNB and multiple nrUEs in RFsimulator mode.
The full Radio Resource Control (RRC), Packet Data Convergence Protocol (PDCP), Radio Link Control (RLC), Medium Access Control (MAC), and physical-layer pipeline therefore executes unchanged, and only the radio channel is emulated in software.
An integrated E2 agent compiled to E2 Application Protocol (E2AP) with E2 Service Model for Key Performance Measurement (E2SM-KPM) reports standardized KPMs, and a single near-RT RIC instance, FlexRIC~\cite{schmidt2021flexric}, connects both sides.
The backbone simulator of the DT is the packet-level ns-O-RAN-flexRIC~\cite{ns-O-RAN-flexric},
and fully simulated setup replaces OAI physical side with an independently seeded run of the same simulator.

\myparatight{Inference details and metrics}
Fields carried in the E2SM-KPM report are read directly, while radio, mobility, and buffering parameters are predicted per feature window.
The predictor trains on a 300-run configuration sweep under a group-by-run split, which keeps every window of a given run entirely in training or entirely in test, and uses 300 trees of depth four at learning rate 0.1, a 30\% held-out run fraction, and five-seed averaging.
A pair denotes one physical run and the DT run that mirrors it.
We report MAPE and, for categorical parameters, the exact-match rate.
For trust decisions we report the gate's coverage, the fraction of steps whose true outcome falls inside the predicted set, and its false-approval rate against the budget $\alpha_{\mathrm{gate}}$, with the detector's false-alarm rate and detection delay against $\alpha_{\mathrm{res}}$.
Energy is scored against an always-on baseline with a linear base-station power model~\cite{lopezperez2022energy}, in which an idle macro site draws approximately $780$\,W and a sleeping one approximately $450$\,W.

\subsection{DT Construction Fidelity}
\label{sub:certification}
\begin{figure}[t]
\centering
\includegraphics[width=0.78\columnwidth]{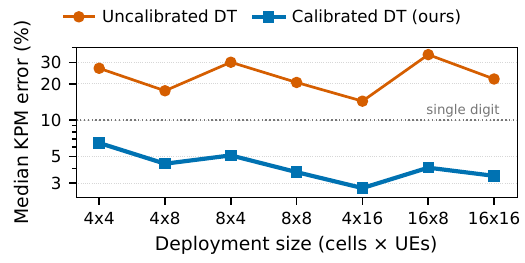}
\caption{DT construction fidelity across seven deployment sizes, as medians over twelve to fourteen paired runs per size.}
\label{fig:fidelity}
\vspace{-.08in}
\end{figure}

\myparatight{DT fidelity}
Figure~\ref{fig:fidelity} compares the uncalibrated and the calibrated DT across seven deployment sizes, from 4 cells with 4 UEs up to 16 cells with 16 UEs.
Inferred configuration parameters alone leave 14.4\% to 34.9\% median KPM error, and online calibration reduces that error to 2.7\% to 6.5\%.
The calibrated DT shows no systematic degradation as the deployment grows sixteenfold.
Its error also stays at or below the simulator's own run-to-run variability at five of the seven sizes and within a point of it at the other two.
That variability is the gap between two independently seeded runs of the same configuration, so a calibrated DT matches the physical network about as well as the network matches itself across runs.
Calibration rather than inversion therefore delivers the single-digit fidelity the conformal fidelity gate later needs, using only the KPM stream the deployment already exports.

\myparatight{Telemetry certification by re-simulation}
Re-simulation drives the uncalibrated DT's monitored telemetry deviation from 18.9\% at the initial inversion down to 10.3\% on held-out targets, and the learned initialization is essential, since the same search started from the midpoint of parameter ranges reaches only 40.5\%.
The central finding is that telemetry improvement does not imply configuration recovery.
Even as monitored telemetry improves, only about four of the eleven continuous parameters move toward their true values while the rest move away.
Many different configurations therefore produce the same observable KPMs.
A flexible correction term can absorb errors that belong to the parameters themselves~\cite{xie2020projected}, so a low KPM error says nothing about whether the configuration is correct.
\algname places its trust in the guarantee layer rather than in configuration recovery.

\subsection{Online DT Calibration}
The RLS expert uses forgetting factor $\lambda=0.99$, window $W=5$, lag order $L=3$, and regularization $\rho=10^{-2}$, retuned to $\lambda=0.90$, $W=10$, and $\rho=10^{-3}$ for the cross-implementation experiments of Section~\ref{sub:crossimpl}.
The fixed-share aggregator uses $\alpha_{\text{sh}}=0.05$ with Hedge learning rate $\eta=2$.
Across 12 paired runs, six stationary and six with injected drift, Fig.~\ref{fig:corrbars} reports 5.89\% MAPE for the adaptive calibrator against 8.72\% for the Kalman filter and 10.59\% for RLS, the two experts it blends.
The adaptive calibrator ties RLS in the stationary condition that favors RLS and beats both experts under drift.
The design target, however, is stable DT fidelity across conditions rather than the lowest overall error.
The adaptive calibrator narrows the spread of MAPE across conditions to about 2.5 points, against 12.3 points for RLS.
Over all runs it also stays within 0.60 MAPE of the best expert chosen in hindsight for each condition, the tracking behavior the fixed-share bound predicts.
A held-out validation on 78 pairs never used for tuning confirms that advantage over Kalman, RLS, exponentially weighted moving average (EWMA), and a persistence baseline that repeats the last observed value, at $p<0.0001$.
A synthetic scalar instance separately validates the tracking floor of Theorem~\ref{thm:rls}, where the measured error traces the predicted convex curve in $1-\lambda$.
The strongest sequence model, Chronos-Bolt~\cite{ansari2024chronos} applied without any training on this data, matches the calibrated error within noise, yet it needs roughly 18\,ms per update, far beyond the near-RT latency budget, against roughly 32\,$\mu$s per KPM.

\begin{figure}[t]
\centering
\includegraphics[width=0.78\columnwidth]{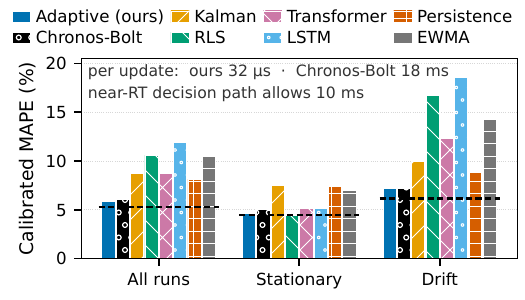}
\caption{Calibrated MAPE per calibrator on 12 paired runs, over all runs and split by condition, where dashed ticks mark the better expert for each condition, chosen after the fact.}
\label{fig:corrbars}
\vspace{-.08in}
\end{figure}

\subsection{Freshness Guarantee Validation}
\label{sub:guarantee}
\myparatight{Drift detection under any inspection schedule}
The drift detector standardizes the deviation stream with robust in-control statistics $\mu_0=2.813$ and $\sigma_0=1.434$, averages its evidence over a grid of candidate upward shifts, and runs at a resynchronization budget $\alpha_{\mathrm{res}}=1/1000$.
Table~\ref{tab:peeking} re-runs every detector at a common 0.05 false-alarm budget, where our detector stays inside that budget under every inspection schedule and never exceeds 0.023.
A fixed-time $z$ test instead alarms with probability above 0.5 when inspected at every step, so continuous inspection is statistically free only for the anytime-valid detectors.
The sub-Gaussian bound of Assumption~\ref{as:incontrol} in Section~\ref{sec:theory} also holds on a public KPM trace from a separate OAI deployment, and across two million simulated in-control steps the detector raises no false alarm while its detection delay decays as $1/\delta^2$, from 68.9 samples at $\delta=0.5\sigma$ to 2.5 at $3\sigma$.
Figure~\ref{fig:detdelay} plots that delay against the analytic floor $\ln(1/\alpha_{\mathrm{res}})/I$, together with the realized duty cycle of Corollary~\ref{cor:duty}.
That fraction falls from 1.00 to 0.16 as the drift grows while both periodic policies stay flat, so detection-triggered resynchronization costs less than either periodic schedule once the drift exceeds roughly $1.5\sigma$.

\begin{figure}[t]
\centering
\includegraphics[width=0.78\columnwidth]{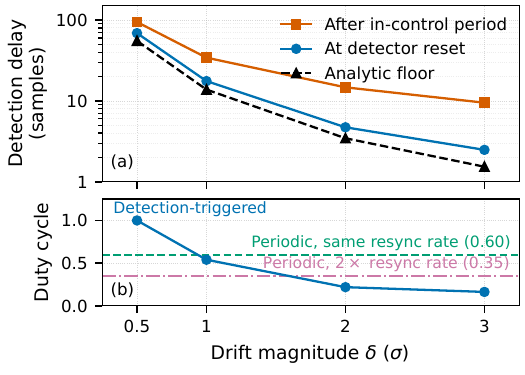}
\caption{Detection delay against the analytic floor $\ln(1/\alpha_{\mathrm{res}})/I$, and duty cycle against periodic policies ($T_{\mathrm{resync}}=4$, $T_{\mathrm{drift}}=40$ samples, 2000 repetitions).}
\label{fig:detdelay}
\vspace{-.08in}
\end{figure}

\begin{table}[t]
\centering
\caption{Probability that each detector ever alarms within 5000 samples with no drift present, across inspection schedules.
Conf.\ seq.\ denotes the time-uniform confidence sequence of~\cite{howard2021timeuniform}, Geometric $\times 2$ inspects at doubling intervals, and Terminal only inspects once at the end.}
\label{tab:peeking}
\renewcommand{\arraystretch}{1.0}\footnotesize
\begin{tabular}{lccc}
\toprule
\textbf{Inspection} & \textbf{Ours} & \textbf{Conf.\ seq.}~\cite{howard2021timeuniform} & \textbf{Fixed-time $z$} \\
\midrule
Every step & 0.023 & 0.035 & 0.512 \\
Every 10 & 0.009 & 0.031 & 0.403 \\
Every 100 & 0.003 & 0.021 & 0.272 \\
Geometric $\times 2$ & 0.011 & 0.010 & 0.330 \\
Terminal only & 0.000 & 0.001 & 0.048 \\
\bottomrule
\end{tabular}
\end{table}

\myparatight{Conformal fidelity gate coverage under drift}
The safe region $\mathcal{S}$ caps cell load at $50\%$ and requires SINR above $-12$\,dB, and the budget $\alpha_{\mathrm{gate}}$ defaults to $0.10$ and is swept over $\{0.02,0.05,0.10,0.20,0.30\}$.
Every gate number below comes from the fixed-step update covered by Theorem~\ref{thm:gate}.
The adaptive radius agrees with the reference MAPIE implementation~\cite{taquet2022mapie} within run-to-run noise.
On scores larger than any seen during calibration it still holds coverage of 0.878 against a 0.90 target, while a fixed radius, calibrated once and never updated, stops at 0.746.
On the fully simulated closed loop, in-control coverage stays within four points of the $1-\alpha_{\mathrm{gate}}$ target in Fig.~\ref{fig:gatecov}, while the drifting and bursty-traffic regimes reach only 0.843 and 0.852 against the 0.90 target.
These runs restart the adaptive update and last only 27 to 35 decision steps, so the radius grows by at most 1.75 deviation points where post-onset scores need 3.5 to 9.6, and the shortfall reflects the short calibration period rather than the gate itself.
The false-approval rate stays low despite that shortfall, because at every budget at least 69.6\% of the steps whose true outcome falls outside the predicted set are steps the gate already refused, and every recorded decision step satisfies the per-step inequality of Theorem~\ref{thm:gate}(iii).
An always-admit xApp instead violates the safe region at rates of 0.346 in control and 0.111 under drift.
The gate therefore fails by refusing safe actions rather than by admitting unsafe ones.

\begin{figure}[t]
\centering
\includegraphics[width=0.78\columnwidth]{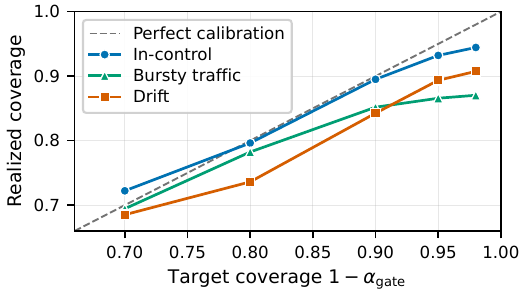}
\caption{Gate coverage against the $1-\alpha_{\mathrm{gate}}$ target, with the drifting and bursty-traffic series calibrated pre-onset and evaluated post-onset (six to eight seeds).}
\label{fig:gatecov}
\vspace{-.08in}
\end{figure}

\subsection{DT-Validated Energy-Saving xApp}
The energy-saving xApp monitors per-UE SINR over E2 and hands a UE to the neighboring cell with the best smoothed SINR when its serving SINR degrades.
It puts a cell to sleep once its last UE leaves and reactivates sleeping cells periodically.
Sweeping the violation budget traces the trade-off curve of Fig.~\ref{fig:opcurve}, where energy saving and admit rate both rise with $\alpha_{\mathrm{gate}}$ and the false-approval rate stays at zero.
At the loosest budget the gate saves 6.12\% energy on the drift scenario, against 11.1\% for an unguarded always-admit xApp that pays the violation rates reported above.
The sweep also exposes a limit on how small a budget the gate can certify.
The gate calibrates on $n_{\mathrm{cal}}$ in-control segments and resolves no budget finer than $1/(n_{\mathrm{cal}}+1)$, so smaller budgets carry measured rather than certified rates.
Lengthening each run to 60\,s gives $n_{\mathrm{cal}}=39$ and a limit of 0.025.
At a 0.05 budget the worst drift scenario then realizes a false-approval rate of 0.0016, more than an order of magnitude below the budget.
A separate set of 210 drift pairs with 60\,s runs measures how much safe action the gate still refuses, using an oracle that sees the true outcome of every action.
At $\alpha_{\mathrm{gate}}=0.30$ the gate admits 0.470 of the actions the oracle marks safe and captures 0.401 of the achievable energy.
Of the actions it admits, 92.3\% prove safe.
Its refusals are selective, since it admits 0.265 of the actions proposed by a safe policy but only 0.019 of the actions proposed by a deliberately unsafe cell-sleep policy.
The DT is conservative in the same direction, since on bursty traffic it predicts a 14.1\% saving for the always-admit controller while the physical network delivers 16.6\%.
\algname therefore gives up some energy rather than claiming savings the physical network cannot deliver.

\begin{figure}[t]
\centering
\includegraphics[width=0.78\columnwidth]{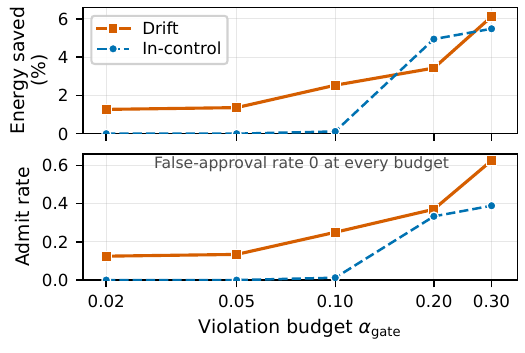}
\caption{Energy saved and admit rate of the conformal fidelity gate across violation budgets, under the prescribed safe region.}
\label{fig:opcurve}
\vspace{-.08in}
\end{figure}

\subsection{Scalability and Overhead}
A fit over 470 runs spanning 1 to 32 cells and 1 to 64 UEs yields one cost model for the real-time factor (RTF), the wall-clock seconds the backbone simulator spends per simulated second, $\mathrm{RTF} = K_{\mathrm{machine}} \cdot T_{\mathrm{traffic}} \cdot \mathrm{cells}^{1.19} \cdot \mathrm{UEs}^{0.95}$ with $R^2=0.949$.
Figure~\ref{fig:rtf} shows measurements from four machines matching that model at a median measured-to-predicted ratio of 1.02 over all 507 recorded runs.
A single DT process stays under 2.5\,GB at every size, so compute rather than memory sets the scaling limit.
One iteration of the candidate loop of Algorithm~\ref{alg:OpenTwin} costs approximately 7.5\,$\mu$s at the median over 62,370 gate decisions.
A complete online update, including deviation scoring and the detector update, ranges from approximately 91\,$\mu$s for the per-KPM RLS-only calibrator to approximately 1.2\,ms for the adaptive calibrator at twenty monitored KPMs, well inside the near-RT loop's 10\,ms bound.
Running the DT forward is the only expensive operation, at approximately 123\,s per eight-second closed-loop segment at four cells and six UEs, and it runs only on resynchronization rather than inside the per-action loop.
The cost model also predicts the length of the regeneration window that Corollary~\ref{cor:duty} needs, from a machine constant and a topology the operator already knows.

\begin{figure}[t]
\centering
\includegraphics[width=0.78\columnwidth]{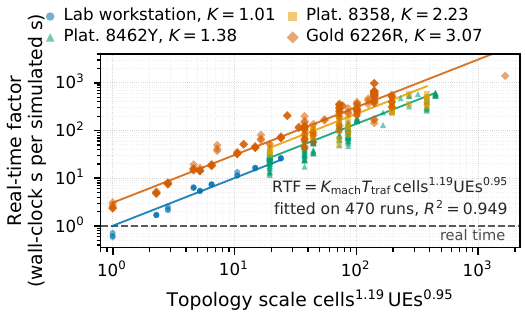}
\caption{Measured real-time factor of the 507 recorded scalability runs against the cost model fitted on 470 runs.}
\label{fig:rtf}
\vspace{-.08in}
\end{figure}


%% file: sections/exp.tex
\section{Testbed Experiments and Validation}
\label{sec:testbed}

\subsection{Testbed Setup}
\label{sub:testbed}
\begin{figure}[t]
\centering
\subfloat[]{\includegraphics[height=1in]{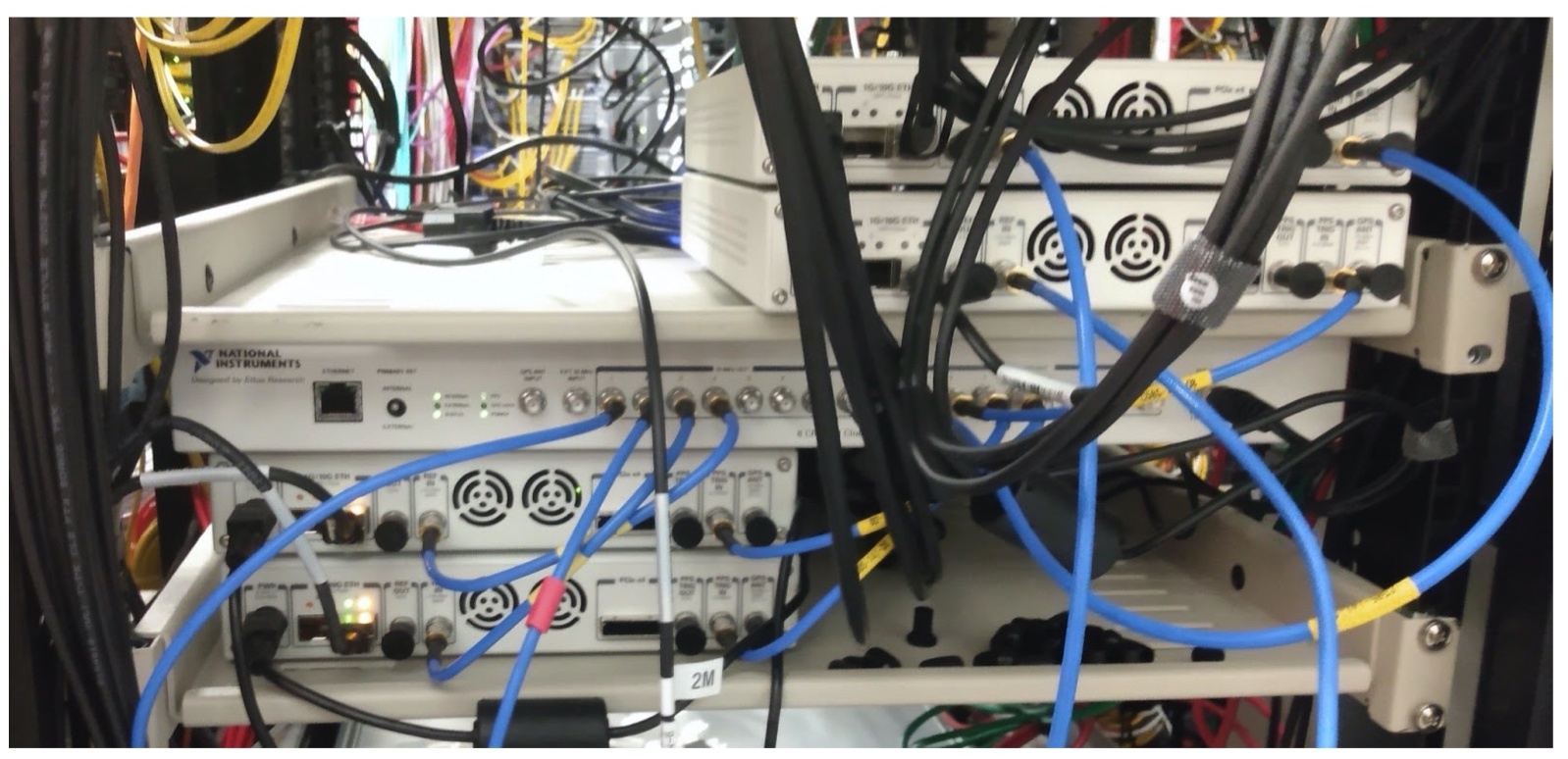}}\hspace{4pt}
\subfloat[]{\includegraphics[height=1in]{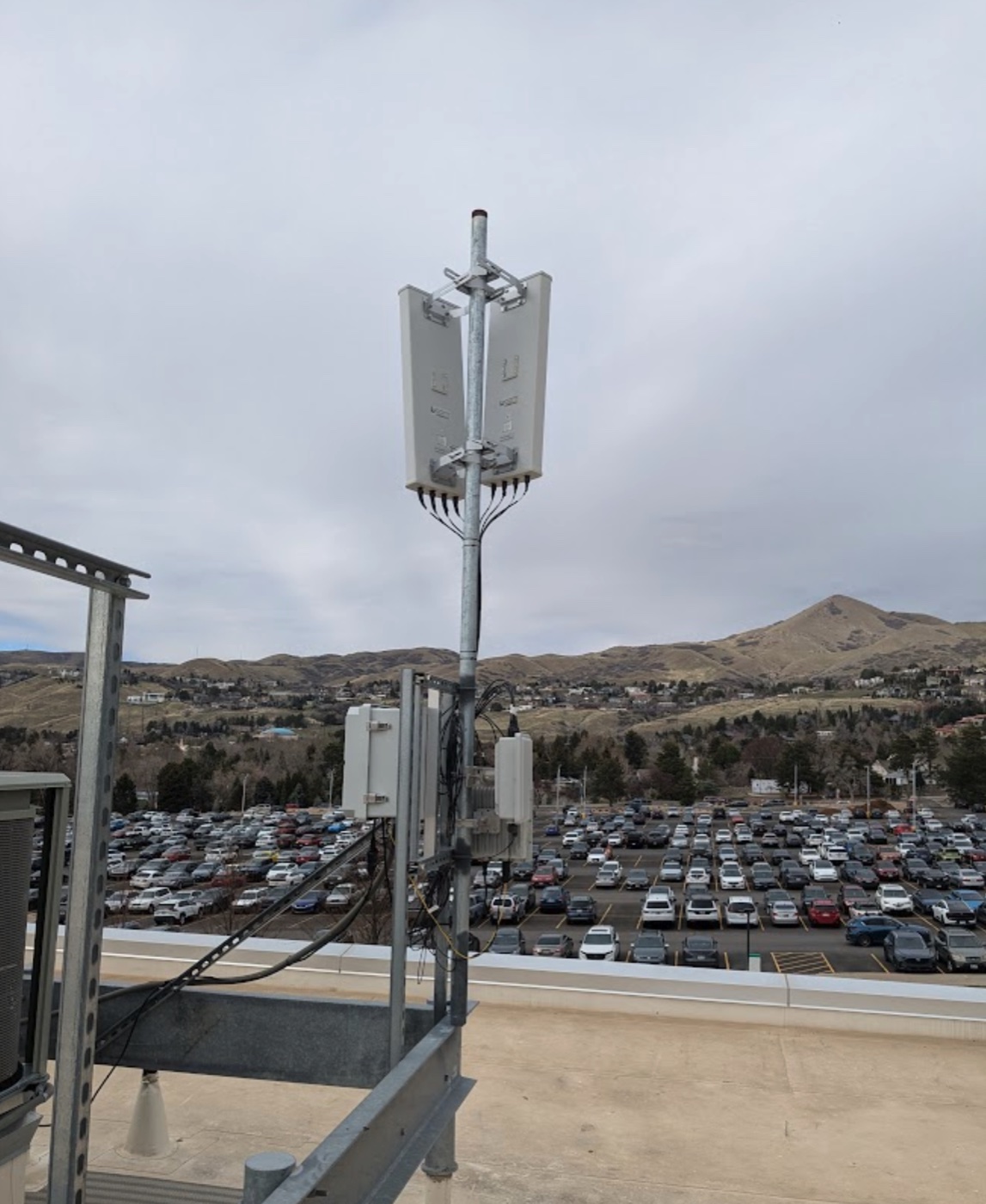}}
\caption{The POWDER testbed of the radio-hardware study: (a) a paired-radio workbench of the kind this study uses, where USRP X310 radios connect over attenuated coaxial links; (b) a rooftop radio unit of the same testbed.
Photos: POWDER testbed, University of Utah.}
\label{fig:powder}
\vspace{-.08in}
\end{figure}
The radio-hardware testbed runs an OAI gNB on a USRP X310 software-defined radio on the POWDER platform~\cite{breen2021powder}, and the gNB serves a commercial Quectel RM520N-GL module through a programmable attenuator matrix, as Fig.~\ref{fig:powder}(a) shows.
Real 5G signals therefore cross a conducted RF path whose channel stays scripted and replayable, and the commanded attenuation serves as physical ground truth for scoring inversion.
Radio hardware limits what an experiment can vary, since the installed radios and topology leave little beyond the commanded attenuation.
The software-radio setup therefore complements the hardware one by running the same OAI stack under RFsimulator, where the entire protocol stack executes unchanged and only the radio channel is replaced by a software link.
It remains a real testbed, since everything above the radio channel is the deployed implementation, and it can sweep deployment sizes, traffic patterns, and configuration parameters far beyond what fixed hardware allows.
The multi-vendor configuration adds a second vendor to that OAI stack, attaching an srsRAN Project gNB~\cite{srsran2025project} that speaks E2AP v3.00 to the same v1.01 RIC, and seven E2 nodes then stay registered at once with live KPM subscriptions.
The srsRAN node runs its complete PDCP, RLC, and MAC stack and its real scheduler over a dummy radio unit that terminates the radio interface in software.

\subsection{Closed-Loop Fidelity on the OAI Software Stack}
\label{sub:crossimpl}

This study runs the entire loop against a physical KPM stream from the OAI stack while the DT keeps its own backbone simulator, so the DT reproduces a deployment that shares none of its code.
Twenty collections of 600\,s each, with KPMs reported at 1\,Hz, drive one closed-loop evaluation apiece.
Twelve run in control at 15 to 60\,Mbps downlink with one to three UEs, and the other eight carry a traffic change at a recorded mid-run time.
Table~\ref{tab:crossimpl} reports four KPMs because the two implementations support different subsets of the E2SM-KPM catalog, and those subsets overlap on exactly these four downlink counters, itself a cross-implementation finding.
The OAI stream exposes 27 of the 128 inversion features, namely those that these four counters can supply, and those 27 suffice to initialize the DT, drive the calibrator, and carry every trust decision.
E2SM-KPM standardizes how a report is encoded but leaves each implementation to decide how it computes each counter, so two compliant stacks are expected to differ by a gap that no amount of modeling can remove.
Even across that gap, the calibrated loop keeps per-KPM error between 0.2\% and 4.9\% on the in-control runs.
One drift run leaves a physical cell idle, and its near-zero relative-error denominator inflates the standard deviations in the drift column.
The detector flags six drift events one KPM indication after onset, five of them scripted changes and one an unscripted traffic-source failure verified in the raw per-UE counters.
Combining that delay with a nine-sample regeneration window through Corollary~\ref{cor:duty} leaves the DT degraded for 0.16 to 0.19 of the time.
Three scripted changes raise no alarm.
One falls inside a resynchronization window that is already open, and the calibrator alone tracks the other two.
The conformal fidelity gate carries over with the same short-horizon coverage shortfall that Section~\ref{sub:guarantee} analyzes, reaching coverage of $0.852\pm0.149$ against a 0.90 target with a zero false-approval rate.
Configuration inference recovers the topology parameters exactly, while bandwidth and offered load stay limited by the four mapped KPMs.
The loop's trust therefore rests on measured agreement rather than on shared implementation.

\begin{table}[t]
\centering
\caption{Calibrated cross-implementation DT fidelity on the OAI-physical closed loop, as evaluation-window MAPE (\%), mean$\pm$std.
Pair type records whether both stacks report a KPM in the same unit or one is rescaled.}
\label{tab:crossimpl}
\renewcommand{\arraystretch}{1.0}\footnotesize
\begin{tabular}{llcc}
\toprule
\textbf{KPM} & \textbf{Pair type} & \textbf{In-control (12)} & \textbf{Drift (8)} \\
\midrule
Active-UE count & same unit & 0.23 $\pm$ 0.42 & 9.83 $\pm$ 27.15 \\
PRB utilization & same unit & 1.47 $\pm$ 2.04 & 20.26 $\pm$ 38.75 \\
Used-PRB volume & rescaled & 1.54 $\pm$ 2.09 & 13.99 $\pm$ 20.54 \\
User-plane volume & rescaled & 4.90 $\pm$ 10.07 & 19.06 $\pm$ 29.06 \\
\bottomrule
\end{tabular}
\end{table}

\subsection{Controlled RF Validation on Radio Hardware}
\label{sub:rf}
The full loop runs end to end on the radio hardware, where inversion estimates the applied attenuation with an error of 5.82\,dB.
The DT then evaluates candidate recovery actions, the gate screens them without using that estimate, and the approved action restores downlink throughput to approximately 119\,Mbps.
A separate session fixes the calibrated radius at $q_t=11.0\%$ and sets a 30\,Mbps throughput floor, and the gate then approves attenuation steps of 0 to 30\,dB and refuses 35 and 40\,dB, as in Fig.~\ref{fig:rfgate}.
Runs executed with the gate bypassed confirm that the refused 40\,dB step breaks the floor, while the 35\,dB step lands just above it and the DT under-predicts its throughput by approximately the calibrated radius.
No unsafe action is approved, and a two-round drift sequence exercises the full resynchronization cycle, with each shock detected two KPM indications after onset.
Five repetitions of the same channel replay agree to within a median of 0.5\% per KPM.
The gate therefore separates safe from unsafe attenuation steps on a physical radio path, extending the admission mechanism to radio hardware without redesign.

\begin{figure}[t]
\centering
\includegraphics[width=0.8\columnwidth]{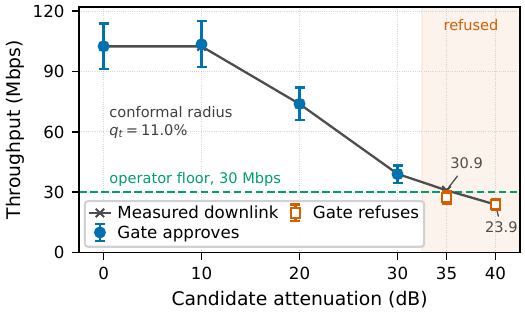}
\caption{Gate decisions on the radio-hardware testbed, where error bars show the DT's predicted throughput range at the calibrated radius.
The 35 and 40\,dB points were executed with the gate bypassed, outside the loop.}
\label{fig:rfgate}
\end{figure}

\subsection{Multi-Vendor Interoperability on a Shared RIC}
\label{sub:interop}

A translation layer rewrites the four message-encoding differences between E2AP v1.01 and v3.00 before setup succeeds.
We then verify actuation at the scheduler rather than at the acknowledgment, and the srsRAN scheduler allocates exactly the commanded quota of $\lfloor \zeta/100 \times 51 \rfloor$ physical resource blocks (PRBs) at every commanded ratio $\zeta$.
Acknowledgments cannot provide that verification, since every E2SM control handler in the evaluated OAI release returns success without invoking the MAC, RRC, or scheduler.
An xApp therefore cannot trust the interface across implementations, and guarantees grounded in KPMs remain the only auditable evidence that its action took effect.
Attaching the second implementation, together with the radio hardware, exposed the seven interoperability defects of Table~\ref{tab:defects}.
Five of them cannot appear in a single-vendor deployment, because a RIC co-developed with its own E2 agent hard-codes that agent's optional information element (IE) conventions and feature subset as assumptions.
These failures are also disproportionate, since a deviation from those assumptions crashes the shared control plane or silently inverts a control acknowledgment so the action applies while the requester is told it did not.
Single-vendor co-development therefore hides the failure modes an open control plane exposes, so exercising the loop against a stack it was not developed with is part of validating it.


\begin{table}[t]
\centering
\caption{Interoperability defects exposed by the multi-vendor and radio-hardware testbeds against the shared RIC.}
\label{tab:defects}
\renewcommand{\arraystretch}{1.0}\footnotesize\setlength{\tabcolsep}{3pt}
\begin{tabular}{p{0.40\columnwidth}p{0.54\columnwidth}}
\toprule
\textbf{Encoded assumption} & \textbf{Observed failure} \\
\midrule
D1: an optional IE is never present & every node-enumerating xApp aborts \\
D2: an optional IE is always present & the RIC crashes at subscription, disconnecting all nodes \\
D3: every advertised RAN function is known & assertion abort instead of the rejected-function list \\
D4: the peer advertises only implemented report styles & the xApp jumps to a null handler; reproduced independently on radio hardware \\
D5: an ENUMERATED keeps its width across versions & \textit{ack} decoded as \textit{no-ack}; control applies, requester never told \\
D6: a RAN-function identifier implies its version & a v2 service model bound to a v3 decoder, unchecked \\
D7: a registered E2 node never disappears mid-session & fatal assertion in the RIC on node loss; observed on radio hardware \\
\bottomrule
\end{tabular}
\end{table}

\subsection{Energy-Saving Policy Training in the DT}
\label{sub:rl}
\begin{figure}[t]
\centering
\includegraphics[width=0.8\columnwidth]{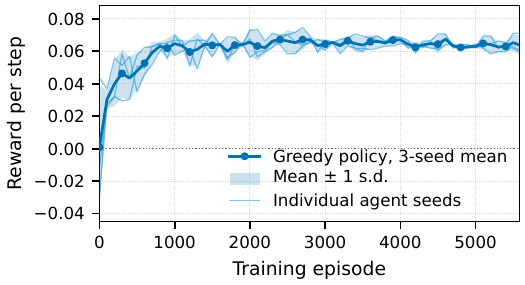}
\caption{Greedy-policy reward on twenty-three held-out DT episodes, as a three-seed mean with $\pm1$ standard deviation.
Training replays ninety-three episodes across sixty passes, so the horizontal axis counts replays.}
\label{fig:rl}
\vspace{-.08in}
\end{figure}
The DT also serves as a training environment, since a cell-sleep policy trains entirely on pre-generated DT episodes and follows the offline protocol of reinforcement-learning work on this simulator~\cite{lacava2023programmable}.
Its reward includes the safe region, so the deployment constraint reaches the policy during training rather than only at admission.
The validation reward rises and then flattens across seeds in Fig.~\ref{fig:rl}, and the learned policy matches a hand-tuned idle-persistence rule at a lower violation rate.
The conformal fidelity gate then screens the learned policy like any other source of actions, admitting one-fifth of its proposals at $\alpha_{\mathrm{gate}}=0.30$ but only 0.3\% of the proposals of a deliberately unsafe policy that sleeps the most loaded cell.
The realized false-approval rate stays inside every budget of that sweep, so the gate remains an independent safety layer.

%% file: sections/conclusion.tex
\section{Conclusion}
\label{sec:conclusion}
In this paper, we introduced \algname, a closed-loop O-RAN DT that tracks the live measurement stream of one deployment and attaches a prescribed error budget to every xApp trust decision at the near-RT RIC.
A drift detector and an adaptive calibrator bound how long the DT can stay out of date, while a conformal fidelity gate screens each action against the operator's safety constraints inside the near-RT decision budget.
Experiments in simulation, on two independently implemented 5G stacks, and on radio hardware confirm single-digit DT error and negligible false-approval rates.

%% file: ref.bib
@book{tripathi2025fundamentals,
  title={Fundamentals of {O-RAN}},
  author={Tripathi, Nishith D. and Shah, Vijay K.},
  year={2025},
  publisher={Wiley},
}

@inproceedings{10.1145/3592149.3592161,
author = {Lacava, Andrea and Bordin, Matteo and Polese, Michele and Sivaraj, Rajarajan and Zugno, Tommaso and Cuomo, Francesca and Melodia, Tommaso},
title = {{ns-O-RAN}: Simulating {O-RAN 5G} Systems in ns-3},
year = {2023},
booktitle = {Proc. Workshop ns-3},
}

@misc{ns-O-RAN-flexric,
  author       = {{Orange Innovation}},
  title        = {{ns-O-RAN-flexric}: {RIC} Testing as a Platform ({RIC-TaaP})},
  year         = {2024},
  howpublished = {\url{https://github.com/Orange-OpenSource/ns-O-RAN-flexric}},
  note         = {Accessed: Aug.\ 2026}
}

@ARTICLE{polese2024colosseum,
  author={Polese, Michele and Bonati, Leonardo and D'Oro, Salvatore and Johari, Pedram and Villa, Davide and Velumani, Sakthivel and Gangula, Rajeev and Tsampazi, Maria and Paul Robinson, Clifton and Gemmi, Gabriele and Lacava, Andrea and Maxenti, Stefano and Cheng, Hai and Melodia, Tommaso},
  journal={IEEE Open J. Commun. Soc.}, 
  title={Colosseum: The Open {RAN} Digital Twin}, 
  year={2024},
}

@inproceedings{schmidt2021flexric,
  author = {Schmidt, Robert and Irazabal, Mikel and Nikaein, Navid},
  title = {{FlexRIC}: An {SDK} for Next-Generation {SD-RANs}},
  booktitle = {Proc. ACM CoNEXT},
  year = {2021},
}

@article{agarwal2025open,
  title={Open {RAN} for {6G} Networks: Architecture, Use Cases and Open Issues},
  author={Agarwal, Bharat and Irmer, Ralf and Lister, David and Muntean, Gabriel-Miro},
  journal={IEEE Commun. Surveys Tuts.},
  year={2026},
}

@techreport{ITUT2022Y3090,
  author       = {{ITU-T-3090}},
  title        = {Digital Twin Network -- Requirements and Architecture},
  institution  = {International Telecommunication Union},
  year         = 2022,
  type         = {Recommendation},
  number       = {Y.3090},
}

@misc{ITU_IMT2030,
  author       = {{ITU-R WP 5D}},
  title        = {{IMT towards 2030 and beyond (IMT-2030)}},
  year         = {2023},
}

@misc{NVIDIA_AerialOmniverse2024,
  author       = {Yang, Jin and Andersin, Michael and Balercia, Tommaso and Chong, CC},
  title        = {Developing Next-Generation Wireless Networks with {NVIDIA} Aerial Omniverse Digital Twin},
  howpublished = {NVIDIA Developer Blog},
  year         = {2024},
}

@article{lacava2023programmable,
  title={Programmable and customized intelligence for traffic steering in {5G} networks using open {RAN} architectures},
  author={Lacava, Andrea and Polese, Michele and Sivaraj, Rajarajan and Soundrarajan, Rahul and Bhati, Bhawani Shanker and Singh, Tarunjeet and Zugno, Tommaso and Cuomo, Francesca and Melodia, Tommaso},
  journal = {IEEE Trans. Mobile Comput.},
  year={2024},
}

@article{zhang2025digital,
  title={Digital Network Twins for Next-Generation Wireless: Creation, Optimization, and Challenges},
  author={Zhang, Zifan and Peng, Zhiyuan and Yu, Hanzhi and Chen, Mingzhe and Liu, Yuchen},
  journal={IEEE Netw.},
  year={2025},
  note={Early access},
}

@article{zhou2025digital,
  title={Digital twins meet open {RAN}: Case studies, implementation, and opportunities},
  author={Zhou, Longyu and Ngo, Mao V and Chen, Binbin and Quek, Tony QS},
  journal={IEEE Commun. Mag.},
  year={2025},
}

@article{owdt2025,
  title={Open Wireless Digital Twin: End-to-End {5G} Mobility Emulation with {OpenAirInterface} and Ray Tracing},
  author={Iye, Tetsuya and Sakamoto, Masaya and Takaya, Shohei and Sato, Eisaku and Susukida, Yuki and Nagaoka, Yu and Maruta, Kazuki and Nakazato, Jin},
  journal={IEEE Access},
  year={2025},
}

@article{tao2025provable,
  title={Provable Performance Bounds for Digital Twin-Driven Reinforcement Learning in Wireless Networks: A Novel Digital-Twin Bisimulation Metric},
  author={Tao, Zhenyu and Xu, Wei and You, Xiaohu},
  journal={IEEE Trans. Signal Process.},
  year={2025},
}

@article{ramdas2023gametheoretic,
  title={Game-Theoretic Statistics and Safe Anytime-Valid Inference},
  author={Ramdas, Aaditya and Gr{\"u}nwald, Peter and Vovk, Vladimir and Shafer, Glenn},
  journal={Statist. Sci.},
  year={2023},
}

@inproceedings{gibbs2021adaptive,
  title={Adaptive Conformal Inference Under Distribution Shift},
  author={Gibbs, Isaac and Cand\`es, Emmanuel},
  booktitle={Proc. NeurIPS},
  year={2021}
}

@article{gibbs2022conformal,
  title={Conformal Inference for Online Prediction with Arbitrary Distribution Shifts},
  author={Gibbs, Isaac and Cand\`es, Emmanuel},
  journal={J. Mach. Learn. Res.},
  year={2024}
}

@inproceedings{angelopoulos2022conformal,
  title={Conformal Risk Control},
  author={Angelopoulos, Anastasios N. and Bates, Stephen and Fisch, Adam and Lei, Lihua and Schuster, Tal},
  booktitle={Proc. ICLR},
  year={2024}
}

@article{shin2022edetectors,
  title={E-detectors: A Nonparametric Framework for Sequential Change Detection},
  author={Shin, Jaehyeok and Ramdas, Aaditya and Rinaldo, Alessandro},
  journal={New Engl. J. Stat. Data Sci.},
  year={2023},
}

@article{simeone2025conformal,
  title={Conformal Calibration: Ensuring the Reliability of Black-Box {AI} in Wireless Systems},
  author={Simeone, Osvaldo and Park, Sangwoo and Zecchin, Matteo},
  journal={IEEE Commun. Mag.},
  year={2026},
}

@article{hou2024automatic,
  title={Automatic {AI} Model Selection for Wireless Systems: Online Learning via Digital Twinning},
  author={Hou, Qiushuo and Zecchin, Matteo and Park, Sangwoo and Cai, Yunlong and Yu, Guanding and Chowdhury, Kaushik and Simeone, Osvaldo},
  journal={IEEE Trans. Wireless Commun.},
  year={2025},
}

@article{jawad2026safetycopilot,
  title   = {A Runtime Safety Copilot for {AI}-Native {O-RAN}: Predictive Verification and Fail-Safe Enforcement in Near-{RT} {RIC} Control Loops},
  author  = {Jawad, Md Asef and Munna, Mohammad Mehedi Hasan and Kabir, Acramul Haque and Antu, Nahid Hossain and Tulona, Reefka Fabliha},
  journal = {IEEE Access},
  year    = {2026},
}

@article{hansen2016cmaes,
  title={The {CMA} Evolution Strategy: A Tutorial},
  author={Hansen, Nikolaus},
  journal={arXiv preprint arXiv:1604.00772},
  year={2016},
}

@article{howard2021timeuniform,
  title={Time-uniform, nonparametric, nonasymptotic confidence sequences},
  author={Howard, Steven R. and Ramdas, Aaditya and McAuliffe, Jon and Sekhon, Jasjeet},
  journal={Ann. Statist.},
  year={2021},
}

@article{taquet2022mapie,
  title={{MAPIE}: an open-source library for distribution-free uncertainty quantification},
  author={Taquet, Vianney and Blot, Vincent and Morzadec, Thomas and Lacombe, Louis and Brunel, Nicolas},
  year={2022},
  journal={arXiv preprint arXiv:2207.12274},
}

@article{ansari2024chronos,
  title={Chronos: Learning the Language of Time Series},
  author={Ansari, Abdul Fatir and Stella, Lorenzo and Turkmen, Caner and Zhang, Xiyuan and Mercado, Pedro and Shen, Huibin and Shchur, Oleksandr and Rangapuram, Syama Sundar and Arango, Sebastian Pineda and Kapoor, Shubham and Zschiegner, Jasper and Maddix, Danielle C. and Wang, Hao and Mahoney, Michael W. and Torkkola, Kari and Wilson, Andrew Gordon and Bohlke-Schneider, Michael and Wang, Yuyang},
  journal={Trans. Mach. Learn. Res.},
  year={2024},
  note={arXiv:2403.07815}
}

@inproceedings{deshpande2025twinran,
  title={{TwinRAN}: Twinning the {5G} {RAN} in {Azure} Cloud},
  author={Deshpande, Yash and Sulkaj, Eni and Kellerer, Wolfgang},
  booktitle={Proc. IEEE/IFIP NOMS},
  year={2025}
}

@inproceedings{zhang2026untwinning,
  title={Network Digital Untwinning: Towards Backward Optimization of Digital Twins},
  author={Zhang, Zifan and Chen, Dianwei and Gao, Anjun and Wang, Manhua and Chen, Mingzhe and Fang, Minghong and Yang, Xianfeng and Liu, Yuchen},
  booktitle={Proc. IEEE ICDCS},
  year={2026},
}

@article{kaltenberger2020oai,
  title={{OpenAirInterface}: Democratizing innovation in the {5G} era},
  author={Kaltenberger, Florian and Silva, Aloizio P. and Gosain, Abhimanyu and Wang, Luhan and Nguyen, Tien-Thinh},
  journal={Comput. Netw.},
  year={2020},
}

@article{breen2021powder,
  title={{POWDER}: Platform for open wireless data-driven experimental research},
  author={Breen, Joe and Buffmire, Andrew and Duerig, Jonathon and Dutt, Kevin and Eide, Eric and Ghosh, Anneswa and Hibler, Mike and Johnson, David and Kasera, Sneha Kumar and Lewis, Earl and Maas, Dustin and Martin, Caleb and Orange, Alex and Patwari, Neal and Reading, Daniel and Ricci, Robert and Schurig, David and Stoller, Leigh B. and Todd, Allison and Van der Merwe, Jacobus and Viswanathan, Naren and Webb, Kirk and Wong, Gary},
  journal={Comput. Netw.},
  year={2021},
}

@misc{srsran2025project,
  author       = {{Software Radio Systems}},
  title        = {{srsRAN Project}: Open-Source {O-RAN} {5G} {CU/DU}},
  year         = {2025},
  howpublished = {\url{https://github.com/srsran/srsRAN_Project}}
}

@article{lopezperez2022energy,
  title={A survey on {5G} radio access network energy efficiency: Massive {MIMO}, lean carrier design, sleep modes, and machine learning},
  author={L{\'o}pez-P{\'e}rez, David and De Domenico, Antonio and Piovesan, Nicola and Geng, Xinli and Bao, Harvey and Song, Qitao and Debbah, M{\'e}rouane},
  journal={IEEE Commun. Surveys Tuts.},
  year={2022},
}

@article{xie2020projected,
  title={Bayesian projected calibration of computer models},
  author={Xie, Fangzheng and Xu, Yanxun},
  journal={J. Amer. Statist. Assoc.},
  year={2020},
}

@misc{orabona2026online,
  author = {Orabona, Francesco},
  title  = {Online Learning: A Modern Introduction Using Convex Optimization},
  year   = {2026},
  note   = {arXiv:1912.13213v10}
}

@inproceedings{chen2016xgboost,
  title={{XGBoost}: A Scalable Tree Boosting System},
  author={Chen, Tianqi and Guestrin, Carlos},
  booktitle={Proc. ACM KDD},
  year={2016}
}
